\algrenewcommand\algorithmicindent{0.75em}%
\newcommand{\calA}{\ensuremath{\mathcal{A}}\xspace}
\newcommand{\R}{\mathbb{R}}
\newcommand{\eps}{\ensuremath{\varepsilon}\xspace}
\newtheorem{thm}{Theorem}
\newtheorem{lem}{Lemma}
\newtheorem{prop}{Property}
\newtheorem{observation}[thm]{Observation}
\newtheorem{cor}{Corollary}
\newtheorem{defn}[thm]{Definition}
\newtheorem{prob}{Problem}
\newcommand{\ignore}[1]{}
\newcommand{\myemph}[1]{{\color{blue}\emph{#1}}}
\newcommand{\func}[1]{{\color{teal}\textbf{#1}}}
\newcommand\algname[1]{\textsf{#1}\xspace}
\newcommand{\Cpp}{C\raise.08ex\hbox{\tt ++}\xspace}
\newcommand{\NP}{{\small \ensuremath{\mathsf{NP}}\xspace}}
\newcommand{\ppf}{{\small \ensuremath{\rm{PPF}}\xspace}}
\newcommand{\vs}{{\ensuremath{v_{\rm{start}}}\xspace}}
\newcommand{\vg}{{\ensuremath{v_{\rm{goal}}}\xspace}}
\newcommand{\astar}{\algname{A$^*$}}
\newcommand{\astareps}{\algname{A$^*_\eps$}}
\newcommand{\boastar}{\algname{BOA$^*$}}
\newcommand{\boastareps}{\algname{BOA$^*_\eps$}}
\newcommand{\is}{\algname{PP-A$^*$}}
\begin{document}
%
\title{Approximate bi-criteria search by efficient representation of subsets of the Pareto-optimal frontier}

\author{Boris Goldin, Oren Salzman\\
Technion---Israel Institute of Technology\\
boris.goldin@campus.technion.ac.il,
osalzman@cs.technion.ac.il
}

\maketitle

\begin{abstract}
We consider the bi-criteria shortest-path problem where we want to compute shortest paths on a graph that simultaneously balance two cost functions.
While this problem has numerous applications, there is usually no path minimizing both cost functions simultaneously. 
Thus, we typically consider the set of paths where no path is strictly better than the others in both cost functions, a set called the Pareto-optimal frontier.
Unfortunately, the size of this set may be exponential in the number of graph vertices and the general problem is \NP-hard.  
While existing schemes to approximate this set exist, they may be slower than exact approaches when applied to relatively small instances and running them on graphs with even a moderate number of nodes is often impractical.
The crux of the problem lies in how to efficiently approximate the Pareto-optimal frontier.
Our key insight is that the Pareto-optimal frontier can be approximated using \myemph{pairs} of paths.
This simple observation allows us to run a best-first search while efficiently and effectively pruning away intermediate solutions in order to obtain an approximation of the Pareto frontier for any given approximation factor.
We compared our approach with an adaptation of \boastar, the state-of-the-art algorithm for computing exact solutions to the bi-criteria shortest-path problem. 
Our experiments show that as the problem becomes harder, the speedup obtained becomes more pronounced.
Specifically, on large roadmaps, when using an approximation factor of $10\%$ we obtain a speedup on the average running time of more than~$\times 19$. 
\end{abstract}

\section{Introduction \& Related Work}
\label{sec:intro}
We consider the bi-criteria shortest-path problem, an extension to the classical (single-criteria) shortest-path problem where we are given a graph $G = (V,E)$ and each edge has two cost functions. 
Here, we are required to compute paths that balance between the two cost functions.
The well-studied problem~\cite{CP07} has numerous applications. For example, given a road network, the two cost functions can represent travel times and distances and we may need to consider the set of paths that allow to balance between these costs.
Other applications include 
planning of power-transmission lines~\cite{bachmann2018multi} 
and 
planning how to transport hazardous material in order to balance between minimizing the travel distance and the risk of exposure
for residents~\cite{bronfman2015maximin}.

There usually is no path minimizing all cost functions simultaneously. 
Thus, we typically consider the set of paths where no path is strictly better then the others for both cost functions, a set called the \myemph{Pareto-optimal frontier}.
Unfortunately, the problem is \NP-hard~\cite{S87} as the cardinality of the size of the Pareto-optimal frontier may be exponential in $|V|$~\cite{Ehrgott05,breugem2017analysis} and even
determining whether a path belongs to the Pareto-optimal frontier is \NP-hard~\cite{PY00}.

Existing methods either try to 
efficiently compute the Pareto-optimal frontier
or to 
relax the problem and only compute an approximation of this set.
\paragraph{Efficient computation of the Pareto-optimal frontier.}
To efficiently compute the Pareto-optimal frontier, adaptations of the celebrated \astar algorithm~\cite{HNR68} were suggested.
Stewart et al.~\cite{stewart1991multiobjective} introduced 
Multi-Objective A* (\algname{MOA$^*$}) which is a  multiobjective extension of $\algname{A}^*$.  
The most notable difference between \algname{MOA$^*$} and $\algname{A}^*$ is in maintaining the Pareto-optimal frontier to intermediate vertices. This requires to check if a path~$\pi$ is \myemph{dominated} by another path~$\tilde{\pi}$. Namely, if both of~$\tilde{\pi}$'s costs are smaller than~$\pi$'s costs.
As these dominance checks are repeatedly performed, the time complexity of the checks play a crucial role for the efficiency of such bi-criteria shortest-path algorithms.
\algname{MOA$^*$} was later revised~\cite{de2005new,mandow2010multiobjective,pulido2015dimensionality} with the most efficient variation, termed bi-Objective \astar (\boastar)~\cite{UYBZSK20} allowing to compute these operations in $O(1)$ time when a consistent heuristic is used.\footnote{A heuristic function is said to be consistent if its estimate is always less than or equal to the estimated distance from any neighbouring vertex to the goal, plus the cost of reaching that neighbour.}

\paragraph{Approximating the Pareto-optimal frontier.}
Initial methods in computing an approximation of the Pareto-optimal frontier were directed towards devising a Fully Polynomial Time Approximation Scheme\footnote{An FPTAS  is an approximation scheme whose time complexity is polynomial in the input size and also polynomial in $1/\eps$ where~$\eps$ is the approximation factor.} (FPTAS)~\cite{V01}.
Warburton~\cite{W87} proposed a method for finding an approximate Pareto-optimal solution to the problem for any degree of accuracy using scaling and rounding techniques. 
%
Perny and Spanjaard~\cite{perny2008near} presented another FPTAS given that a finite upper bound $L$ on the numbers of arcs of all solution-paths in the Pareto-frontier is known.
This requirement was later relaxed~\cite{TZ09,breugem2017analysis} by partitioning the space of solutions into cells according to the approximation factor and, roughly speaking, taking only one solution in each grid cell.
%
%
%
Unfortunately, the running times of FPTASs are typically polynomials of high degree, and hence they may be slower than exact approaches when applied to relatively-small instances and running them on graphs with even a moderate number of nodes (e.g., $\approx 10,000$) is often impractical~\cite{breugem2017analysis}.

A different approach to compute a subset of the Pareto-optimal frontier is to find all extreme supported non-dominated points (i.e., the extreme points on the convex hull of the Pareto-optimal set)~\cite{sedeno2015dijkstra}.
Taking a different approach Legriel et al.~\cite{LLCM10} suggest a method  based on satisfiability/constraint solvers.
Alternatively, a simple variation of \algname{MOA$^*$}, termed \algname{MOA$^*_\eps$} allows to compute an approximation of the Pareto-optimal frontier by pruning intermediate paths that are approximately dominated by already-computed solutions~\cite{perny2008near}. However, as we will see, this allows to prune only a small subset of paths that may be pruned.

Finally, recent work~\cite{BC20} conducts a comprehensive computational study with an emphasis on multiple criteria.
Similar to the aforementioned FPTASs, their framework still partitions the space prior to running the algorithm.

\paragraph{Key contribution.}
To summarize, exact methods compute a solution set whose size is often exponential in the size of the input. 
While one would expect that approximation algorithms will allow to dramatically speed up computation times, in practice their running times are often slower than exact solutions for FPTAS's because they partition the space of solutions into cells according to the approximation factor in advance. 
Alternative methods only prune paths that are approximately dominated by already-computed solutions.

Our key insight is that we can efficiently partition the space of solutions into cells during the algorithm's execution (and not a-priori). This allows us to efficiently and effectively prune away intermediate solutions in order to obtain an approximation of the Pareto-optimal frontier for any given approximation factor $\eps$ (this will be formalized in Sec.~\ref{sec:pdf}).
This is achieved by running a best-first search on \myemph{path pairs} and not individual paths. Such path pairs represent a subset of the Pareto-optimal frontier such that any solution in this subset is approximately dominated by the two paths. 
Using concepts that draw inspiration from a recent search algorithm from the robotics literature~\cite{FuKSA19}, 
we propose Path-Pair  \astar (\is).
\is dramatically reduces the computational complexity of the best-first search by merging path pairs while still ensuring that an approximation of the Pareto-optimal frontier is obtained for any desired approximation. 

For example, on a roadmap of roughly 1.5 million vertices, \is approximates the Pareto-optimal frontier within a factor of $1\%$ in roughly 13 seconds on average on a commodity laptop.
We compared our approach with an adaptation of \boastar~\cite{UYBZSK20}, the state-of-the-art algorithm for computing exact solutions to the bi-criteria shortest-path problem, which we term \boastareps. \boastareps computes near-optimal solutions by using the approach suggested in~\cite{perny2008near}.
Our experiments show that as the problem becomes harder, the speedup that \is may offer becomes more pronounced.
Specifically, on the aforementioned roadmap and using an approximation factor of $10\%$, we obtain 
a speedup on the average running time of more than $\times 19$
and a maximal speedup of over~$\times 25$.

%


\section{Problem Definition}
\label{sec:pdf}
Let $G = (V,E)$ be a graph, 
$c_1 : E \rightarrow \R$ and
$c_2 : E \rightarrow \R$ be two cost functions defined over the graph edges.
A path $\pi = v_1, \ldots v_k$ is a sequence of vertices where consecutive vertices are connected by an edge.
We extend the two cost functions to paths as follows:
$$
c_1(\pi) = \sum_{i=1}^{k-1} c_1(v_i, v_{i+1})
\hspace{2mm}
\text{ and }
\hspace{2mm}
c_2(\pi) = \sum_{i=1}^{k-1} c_2(v_i, v_{i+1}).
$$

Unless stated otherwise, 
all paths start at the same specific vertex $\vs$
and $\pi_u$ will denote a path to vertex $u$.
\begin{defn}[Dominance]
Let $\pi_u$ and $\tilde{\pi}_u$ be two paths to vertex $u$.
We say that $\pi_u$ \myemph{weakly dominates}~$\tilde{\pi}_u$ if
(i)~$c_1(\pi_u) \leq c_1(\tilde{\pi}_u)$ and
(ii)~$c_2(\pi_u) \leq c_2(\tilde{\pi}_u)$.
We say that~$\pi_u$ \myemph{strictly dominates}~$\tilde{\pi}_u$ if
(i)~$\pi_u$ {weakly dominates}~$\tilde{\pi}_u$ and
(ii)~$c_1(\pi_u) < c_1(\tilde{\pi}_u)$ or~$c_2(\pi_u) < c_2(\tilde{\pi}_u)$.
\end{defn}
\begin{defn}[Approximate dominance]
Let $\pi_u$ and $\tilde{\pi}_u$ be two paths to vertex $u$
and 
let $\eps_1 \geq 0$ and $\eps_2 \geq 0$ be two real values.
We say that $\pi_u$ \myemph{$(\eps_1,\eps_2)$-dominates}~$\tilde{\pi}_u$ if
(i)~$c_1(\pi_u) \leq (1 + \eps_1) \cdot c_1(\tilde{\pi}_u)$ and
(ii)~$c_2(\pi_u) \leq (1 + \eps_2) \cdot  c_2(\tilde{\pi}_u)$.
When $\eps_1 = \eps_2$, we will sometimes say that $\pi_u$ \myemph{$(\eps_1)$-dominates}~$\tilde{\pi}_u$ and call $\eps_1$ the \myemph{approximation factor}.
\end{defn}
\begin{defn}[(approximate) Pareto-optimal frontier]
The~\myemph{Pareto-optimal frontier} $\Pi_{u}$ of a vertex $u$ is a set of paths connecting $\vs$ and~$u$ such that 
(i)~no path in $\Pi_{u}$ is strictly dominated by any other path from $\vs$ to $u$
and
(ii)~every path from $\vs$ to $u$ is weakly dominated by  a path in $\Pi_{u}$.
Similarly, for $\eps_1 \geq 0$ and $\eps_2 \geq 0$ the \myemph{approximate Pareto-optimal frontier}\footnote{Our definition of an approximate Pareto-optimal frontier slightly differs from existing definitions~\cite{breugem2017analysis} which do not require that the  approximate Pareto frontier is a subset of the Pareto-optimal frontier.} $\Pi_{u}(\eps_1,\eps_2) \subseteq \Pi_u$ is a subset of $u$'s Pareto frontier such that every path in $\Pi_{u}$ is $(\eps_1,\eps_2)$-dominated by a path in $\Pi_{u}(\eps_1,\eps_2)$.
\end{defn}
\noindent
For brevity we will use the terms 
(approximate) Pareto frontier
to refer to the 
(approximate) Pareto-optimal frontier.
For a visualization of these notions, see Fig.~\ref{fig:dominance}.
\begin{figure}[tb]
  \centering
  \includegraphics[height=4.5cm]{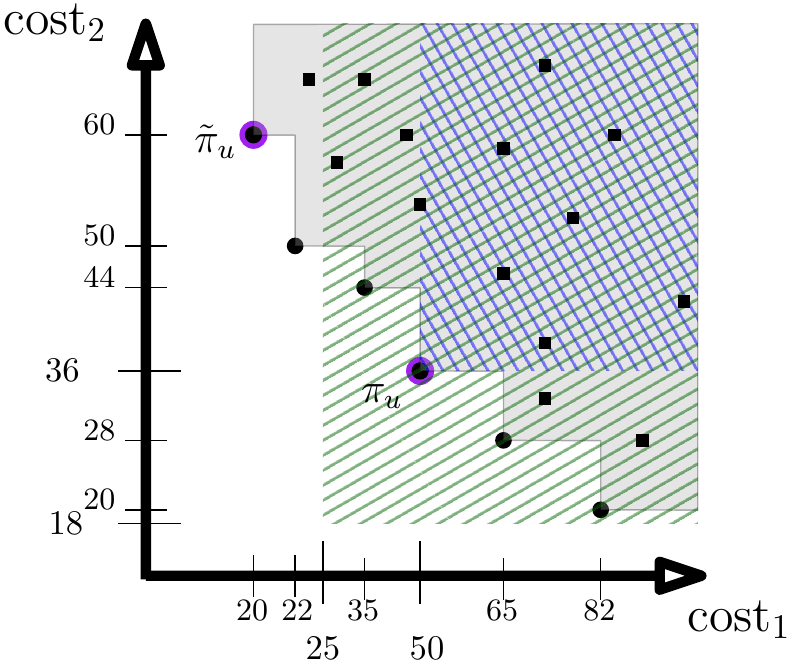}
  \caption{(approximate) Dominance and (approximate) Pareto frontier. Given start and target vertices, we consider each path $\pi_u$ as a 2D point $(c_1(\pi_u), c_2(\pi_u))$ according to the two cost functions (points and squares). 
  The set of all possible paths dominated and approximately dominated by path $\pi_u$  are depicted in blue and green, respectively (for $\varepsilon_1 = \varepsilon_2 = 1$). The Pareto frontier $\Pi_u$ is the set of all black points that collectively dominate all other possible paths (squares in grey region). 
  Finally, an approximate Pareto frontier $\Pi_u(1,1) = \{\pi_u, \tilde{\pi}_u \}$ is depicted by the two purple circles.}
  \label{fig:dominance}
\end{figure}

We are now ready to formally define our search problems.
\begin{prob}[Bi-criteria shortest path]
\label{prob:1}
Let $G$ be a graph,
$c_1, c_2 : E \rightarrow \R$ two cost functions 
and
$\vs$ and~$\vg$ be start and goal vertices, respectively.
The \myemph{bi-criteria shortest path} problem calls for computing the Pareto frontier $\Pi_{\vg}$.
\end{prob}
\begin{prob}[Bi-criteria approximate shortest path]
\label{prob:2}
Let $G$ be a graph,
$c_1, c_2 : E \rightarrow \R$ two cost functions 
and
$\vs$ and~$\vg$ be start and goal vertices, respectively.
Given~$\eps_1 \geq 0$ and $\eps_2 \geq 0$, the \myemph{bi-criteria approximate shortest path} problem calls for computing an approximate Pareto frontier~$\Pi_{\vg}(\eps_1, \eps_2)$.
\end{prob}


\section{Algorithmic Background}
\label{sec:background}
In this section we describe two approaches to solve the bi-criteria shortest-path problem (Problem~\ref{prob:1}).
With the risk of being tedious, we start with a brief review of best-first search algorithms as both state-of-the-art bi-criteria shortest path algorithms, as well as ours, rely heavily on this algorithmic framework.
We note that the description of best-first search  we present here can be optimized but this version will allow us to better explain the more advanced algorithms.

A best-first search algorithm   (Alg.~\ref{alg:astar}) computes a shortest path from $\vs$ to $\vg$ by maintaining a priority queue, called an OPEN list, that contains all the nodes that have not been expanded yet (line~\ref{alg:astar-line1}). 
Each node is associated with a path $\pi_u$ from $\vs$ to some vertex $u\in V$ (by a slight abuse of notation we will use paths and nodes interchangeability which will simplify algorithm's descriptions in the next sections). 
This queue is ordered according to some cost function called the $f$-value of the node. 
For example, in Dijkstra and \astar, this is the computed cost from~$\vs$ (also called its $g$-value) and the computed cost from~$\vs$ added to the heuristic estimate to reach~$\vg$, respectively.

At each iteration (lines~\ref{alg:astar-line3}-\ref{alg:astar-line13}), 
the algorithm extracts the most-promising node from OPEN (line~\ref{alg:astar-line3}),
checks if it has the potential to be a better solution than any found so far (line~\ref{alg:astar-line4}).
If this is the case and we reached $\vg$, the solution set is updated (in single-criteria shortest path, once a solution is found, the search can be terminated).
If not, we extend the path represented by this node to each of it's neighbors (line~\ref{alg:astar-line10}).
Again, we check if it has the potential to be a better solution than any found so far (line~\ref{alg:astar-line11}).
If this is the case, it is added to the OPEN list.

Different single-criteria search algorithms such as Dijkstra, \astar, \astareps as well as bi-criteria search algorithms such \boastar fall under this framework.
They differ with how OPEN is ordered and how the different functions (highlighted in Alg.~\ref{alg:astar}) are implemented.

\begin{algorithm}[t!]
    \textbf{Input: ($G = (V,E), \vs, \vg, \ldots$)}
    \begin{algorithmic}[1]
        
        \State {${\rm OPEN}\gets$ new node $\pi_{\vs}$}
        \label{alg:astar-line1} 

        \vspace{1mm}
        \While{${\rm{OPEN}} \neq \emptyset$}
          \State $\pi_u \gets$ \rm{OPEN{}.\func{extract\_min}}()
          \label{alg:astar-line3}
          \If {\func{is\_dominated}($\pi_u$) \label{alg:astar-line4}}
            \State {\textbf{continue}}
          \EndIf
        \vspace{1mm}
          \If {$u=\vg$} \Comment{reached goal}
            \State {\func{merge\_to\_solutions}($\pi_u$, solutions)}
            \State {\textbf{continue}}
          \EndIf
        \vspace{1mm}
          \For{$e=(u,v) \in$ {neighbors}($u, G$)}
            \State $\pi_v \gets$ \func{extend}($\pi_u, e$)
            \label{alg:astar-line10} 
            \If {\func{is\_dominated}($\pi_v $) \label{alg:astar-line11} } 
              \State {\textbf{continue}}
            \EndIf
        \vspace{1mm}
            \State {\func{insert}($\pi_v , \rm{OPEN}$)}
            \label{alg:astar-line13}
          \EndFor
        
        \EndWhile
       \State{\textbf{return} all extreme paths in solutions}
\end{algorithmic}
    \caption{Best First Search}
    \label{alg:astar}
\end{algorithm}

\paragraph*{Bi-Objective \astar (\boastar)} 
To efficiently solve Problem~\ref{prob:1},
bi-Objective \astar (\boastar) runs a best-first search.
The algorithm is endowed with two heuristic functions $h_1, h_2$ estimating the cost to reach $\vg$ from any vertex
according to~$c_1$ and $c_2$, respectively.
Here, we assume that these heuristic functions  are admissible and consistent.
This is key as the efficiency of \boastar relies on this assumption.

Given a node $\pi_u$, we define 
$g_i(\pi_u)$ to be the computed distance according to $c_i$.
It can be easily shown that in best-first search algorithms $g_i := c_i(\pi_u)$.
Additionally, we define 
$f_i(\pi_u) := g_i(\pi_u) + h_i(\pi_u)$.
Although the cost and the $g$-value of a path can be used interchangeably, we will use the former to describe general properties of paths and the latter to describe algorithm operations.
Nodes in OPEN are ordered lexicographically according to $(f_1, f_2)$  which concludes the description of how \func{extract\_min} and \func{insert} (lines~\ref{alg:astar-line3} and~\ref{alg:astar-line13}, respectively) are implemented.

Domination checks, which are typically time-consuming in bi-criteria search algorithms are implemented in $O(1)$ per node by maintaining for each vertex $u \in V$ the minimal cost to reach $u$ according to $c_2$ computed so far.
This value is maintained in a map $g_2^{\rm min}: V \rightarrow \R$ which is initialized to~$\infty$ for each vertex.
This allows to implement the function 
\func{is\_dominated} for a node $\pi_u$ by testing
if 
\begin{equation}
\label{eq:d00}
g_2(\pi_u) \geq g_2^{\rm min}(u) 
\text{ or }
f_2(\pi_u) \geq g_2^{\rm min}(\vg).
\end{equation}

The first test checks if the node is dominated by an already-extended node and replaces the CLOSED list typically used in \astar-like algorithms.
The second test checks if the node has the potential to reach the goal with a solution whose cost is not dominated by any existing solution. 
Finally, the function \func{merge\_to\_solutions} simply adds a newly-found solution to the solution set.

\paragraph{Computing the approximate Pareto frontier}
Perny and Spanjaard~\cite{perny2008near} suggest to compute an approximate Pareto frontier by endowing the algorithm with an approximation factor $\eps$.
When a node is popped from OPEN, we test if its $f$-value is $\eps$-dominated by any solution that was already computed.
While this algorithm was presented before \boastar and hence uses computationally-complex dominance checks, we can easily use this approach to adapt \boastar to compute an approximate Pareto frontier.
This is done by replacing the dominance check in Eq.~\ref{eq:d00} with the test
\begin{equation}
\label{eq:d0}
g_2(\pi_u) \geq g_2^{\rm min}(v) \text{ or }
(1 + \eps) \cdot f_2(\pi_u) \geq g_2^{\rm min}(\vg).
\end{equation}
We call this algorithm \boastareps.

\section{Algorithmic Framework}
\subsection{Preliminaries}
%

Recall that (single-criteria) shortest-path algorithms such as \astar find a solution by computing the shortest path to all nodes that have the potential to be on the shortest path to the goal (namely, whose $f$-value is less than the current estimate of the cost to reach $\vg$).
Similarly, bi-criteria search algorithms typically compute for each node the subset of the Pareto frontier that has the potential to be in~$\Pi_{\vg}$.

Now, near-optimal (single-criteria) shortest-path algorithms such as \astareps~\cite{pearl1982studies} attempt to speed up this process by only  \myemph{approximating} the shortest path to intermediate nodes.
Similarly, we suggest to construct only an approximate Pareto frontier for intermediate nodes which, in turn, will allow to dramatically reduce computation times.
Looking at Fig.~\ref{fig:dominance}, one may suggest to run an \astar-like search and if a path $\pi_u$ on the Pareto frontier $\Pi_{u}$ of  $u$ is approximately dominated by another path $\tilde{\pi}_u \in \Pi_{u}$, then discard~$\pi_u$. Unfortunately, this does not account for paths in~$\Pi_{u}$ that may have been approximately dominated by $\pi_u$ and hence discarded in previous iterations of the search. 
Existing methods use very conservative bounds to prune intermediate paths. For example, as stated in Sec.~\ref{sec:intro}, if a bound $L$ on the  length of the longest path exists, we can use this strategy by replacing $(1 + \eps)$ with $(1 + \eps)^{1/L}$ to account for error propagation~\cite{perny2008near}.

\begin{figure}[tb]
  \centering
  \includegraphics[height=4.5cm]{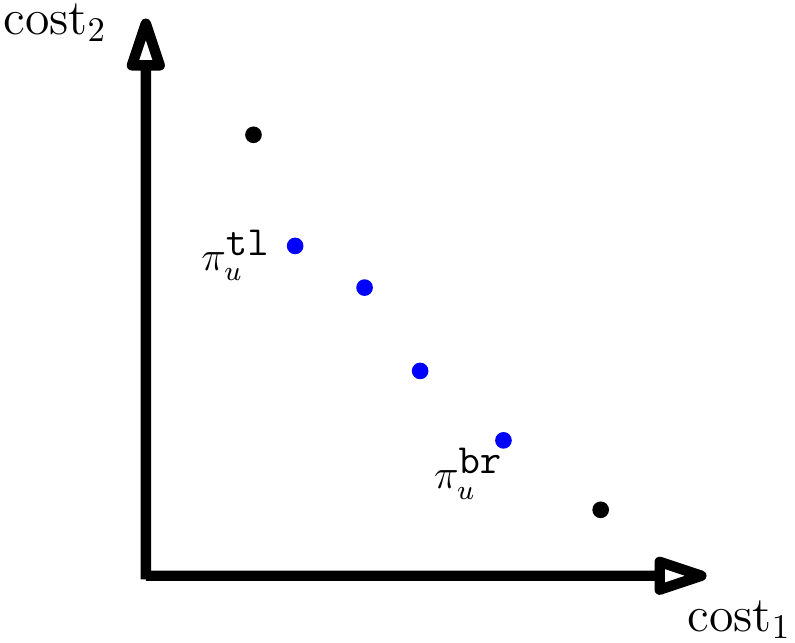}
  \caption{The partial Pareto frontier of two paths $\pi_u^{\texttt{tl}}$ and~$\pi_u^{\texttt{br}}$ is the set of all paths (blue dots) on the Pareto frontier (blue and black dots) between these paths. 
  Lemma~\ref{lem:ppf-dom} implies that any path represented by a blue dot is approximately dominated by $\pi_u^{\texttt{tl}}$ and $\pi_u^{\texttt{br}}$ for $
   \eps_1 = 
    \frac{c_1(\pi_u^{\texttt{br}}) - c_1(\pi_u^{\texttt{tl}})}{c_1(\pi_u^{\texttt{tl}})}
$
and
$
  \eps_2 = 
    \frac{c_2(\pi_u^{\texttt{tl}}) - c_2(\pi_u^{\texttt{br}})}{c_2(\pi_u^{\texttt{br}})}
$.}
  \label{fig:ppf}
\end{figure}

In contrast, we suggest a simple-yet-effective method to prune away approximately-dominated solutions using the notion of a partial Pareto frontier which we now define.
\begin{defn}[Partial Pareto frontier \ppf]
Let $\pi_u^{\texttt{tl}}, \pi_u^{\texttt{br}} \in \Pi_{u}$ be two paths on the Pareto frontier of vertex $u$ such that $c_1(\pi_u^{\texttt{tl}}) < c_1 (\pi_u^{\texttt{br}})$
(here, \texttt{tl} and \texttt{br} are shorthands for ``top left'' and ``bottom right'' for reasons which will soon be clear).
Their \myemph{partial Pareto frontier} 
$\ppf_{u}^{\pi_u^{\texttt{tl}}, \pi_u^{\texttt{br}}} \subseteq \Pi_{u}$ is a subset of a Pareto frontier such that 
if 
$\pi_u \in \Pi_{u}$ and 
$c_1(\pi_u^{\texttt{tl}}) < c_1 ({\pi}_u) < c_1 (\pi_u^{\texttt{br}})$
then
${\pi}_u  \in \ppf_{u}^{\pi_u^{\texttt{tl}}, \pi_u^{\texttt{br}}}$. The paths 
$\pi_u^{\texttt{tl}}, \pi_u^{\texttt{br}}$ are called the \myemph{extreme} paths of $\ppf_{u}^{\pi_u^{\texttt{tl}}, \pi_u^{\texttt{br}}}$
For a visualization, see Fig.~\ref{fig:ppf}.
\end{defn}

\begin{defn}[Bounded \ppf]
A {partial Pareto frontier} 
$\ppf_{u}^{\pi_u^{\texttt{tl}}, \pi_u^{\texttt{br}}} \subseteq \Pi_{u}$ is 
\myemph{$(\eps_1, \eps_2)$-bounded} 
if 
$$
  \eps_1 \geq \frac{c_1(\pi_u^{\texttt{br}}) - c_1(\pi_u^{\texttt{tl}})}{c_1(\pi_u^{\texttt{tl}})}
  \text{ and }
  \eps_2 \geq \frac{c_2(\pi_u^{\texttt{tl}}) - c_2(\pi_u^{\texttt{br}})}{c_2(\pi_u^{\texttt{br}})}.
$$
\end{defn}

\begin{lem}
\label{lem:ppf-dom}
If $\ppf_{u}^{\pi_u^{\texttt{tl}}, \pi_u^{\texttt{br}}}$ 
is an $(\eps_1, \eps_2)$-bounded partial Pareto frontier 
then any path in $\ppf_{u}^{\pi_u^{\texttt{tl}}, \pi_u^{\texttt{br}}}$ is 
$(\eps_1, \eps_2)$-dominated by both $\pi_u^{\texttt{tl}}$ and $\pi_u^{\texttt{br}}$.
\end{lem}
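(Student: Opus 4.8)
The plan is to work entirely in the cost plane, identifying each path with the point $(c_1(\cdot),c_2(\cdot))$, and to reduce the claim to four scalar inequalities. Recall that for a path $\sigma$ to $(\eps_1,\eps_2)$-dominate $\pi_u$ we need $c_1(\sigma)\le (1+\eps_1)\,c_1(\pi_u)$ and $c_2(\sigma)\le (1+\eps_2)\,c_2(\pi_u)$. Thus I must verify these two inequalities once for $\sigma=\pi_u^{\texttt{tl}}$ and once for $\sigma=\pi_u^{\texttt{br}}$, for an arbitrary $\pi_u\in\ppf_{u}^{\pi_u^{\texttt{tl}},\pi_u^{\texttt{br}}}$. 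Throughout I assume the edge costs are positive, so that all three points have positive coordinates; this is what makes the bounded-\ppf ratios well defined and lets me multiply inequalities by factors $(1+\eps_i)\ge 1$ without changing their direction.

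First I would establish the monotonicity of the Pareto frontier. Since $\pi_u^{\texttt{tl}},\pi_u^{\texttt{br}},\pi_u$ all lie in $\Pi_{u}$, no one of them strictly dominates another, so ordering by $c_1$ forces the reverse order in $c_2$: a strictly smaller $c_1$ must come with a strictly larger $c_2$, otherwise one point would strictly dominate the other. Because $\pi_u\in\ppf_{u}^{\pi_u^{\texttt{tl}},\pi_u^{\texttt{br}}}\subseteq\Pi_{u}$ and, by definition of the \ppf, $c_1(\pi_u^{\texttt{tl}})\le c_1(\pi_u)\le c_1(\pi_u^{\texttt{br}})$, this yields the sandwich $c_2(\pi_u^{\texttt{br}})\le c_2(\pi_u)\le c_2(\pi_u^{\texttt{tl}})$. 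Intuitively, $\pi_u$ sits between the two extreme points on a decreasing frontier, which is exactly the picture in Fig.~\ref{fig:ppf}.

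Next I would rewrite the $(\eps_1,\eps_2)$-bounded hypothesis in product form: the two ratio bounds are equivalent to $c_1(\pi_u^{\texttt{br}})\le(1+\eps_1)\,c_1(\pi_u^{\texttt{tl}})$ and $c_2(\pi_u^{\texttt{tl}})\le(1+\eps_2)\,c_2(\pi_u^{\texttt{br}})$. With the sandwich and these two bounds in hand, each of the four required inequalities follows in a single line. Two of them are immediate from the sandwich together with $\eps_i\ge 0$, namely $c_1(\pi_u^{\texttt{tl}})\le c_1(\pi_u)\le(1+\eps_1)\,c_1(\pi_u)$ and $c_2(\pi_u^{\texttt{br}})\le c_2(\pi_u)\le(1+\eps_2)\,c_2(\pi_u)$. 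The remaining two are the crux and each chains a sandwich inequality with a bound: $c_1(\pi_u^{\texttt{br}})\le(1+\eps_1)\,c_1(\pi_u^{\texttt{tl}})\le(1+\eps_1)\,c_1(\pi_u)$ settles the first coordinate for $\pi_u^{\texttt{br}}$, and symmetrically $c_2(\pi_u^{\texttt{tl}})\le(1+\eps_2)\,c_2(\pi_u^{\texttt{br}})\le(1+\eps_2)\,c_2(\pi_u)$ settles the second coordinate for $\pi_u^{\texttt{tl}}$.

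I expect the only genuine subtlety to be the monotonicity step: making precise that three Pareto-optimal points ordered by $c_1$ are reverse-ordered by $c_2$, and cleanly absorbing the degenerate cases where $\pi_u$ coincides with an extreme path (there the relevant inequalities collapse to equalities and the claim is trivial, but they should be mentioned so the sandwich is stated with non-strict inequalities). Everything after the sandwich is elementary algebra driven by the two bounded-\ppf inequalities, so the argument should be short once the frontier's monotone structure is in place.
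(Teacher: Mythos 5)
Your proof is correct and follows essentially the same route as the paper's: rewrite the $(\eps_1,\eps_2)$-bounded condition in product form and chain it with the ordering $c_1(\pi_u^{\texttt{tl}})\le c_1(\pi_u)\le c_1(\pi_u^{\texttt{br}})$ and the reversed ordering in $c_2$. The only difference is that you explicitly justify the $c_2$ ``sandwich'' via Pareto-optimality of the three points, a step the paper's proof asserts without comment, so your write-up is if anything slightly more complete.
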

\begin{proof}
Let $\pi_u \in \ppf_{u}^{\pi_u^{\texttt{tl}}, \pi_u^{\texttt{br}}}$.
By definition, we have that 
$c_1(\pi_u^{\texttt{tl}}) < c_1 (\pi_u)$
and that
$\eps_1 \geq \frac{c_1(\pi_u^{\texttt{br}}) - c_1(\pi_u^{\texttt{tl}})}{c_1(\pi_u^{\texttt{tl}})}$.
Thus,
$$
c_1(\pi_u^{\texttt{br}}) \leq (1 + \eps_1) \cdot c_1(\pi_u^{\texttt{tl}})
              < (1 + \eps_1) \cdot c_1(\pi_u).
$$
As $c_2(\pi_u^{\texttt{br}}) < c_2(\pi_u)$, we have that $\pi_u^{\texttt{br}}$ approximately dominates $\pi_u$.

Similarly, 
by definition, we have that 
$c_2 (\pi_u) > c_2 (\pi_u^{\texttt{br}})$
and that
$\eps_2 \geq \frac{c_2(\pi_u^{\texttt{tl}}) - c_2(\pi_u^{\texttt{br}})}{c_2(\pi_u^{\texttt{br}})}$.
Thus,
$$
c_2(\pi_u^{\texttt{tl}}) \leq (1 + \eps_2) \cdot c_1(\pi_u^{\texttt{br}})
              < (1 + \eps_2) \cdot c_1(\pi_u).
$$
As $c_1(\pi_u^{\texttt{tl}}) < c_1(\pi_u)$, we have that $\pi_u^{\texttt{tl}}$ approximately dominates $\pi_u$.
\end{proof}

\subsection{Algorithmic description}
\begin{figure*}[t]%
\centering
  \subfloat[]{
    \label{fig:extend}
    \includegraphics[height=5cm]{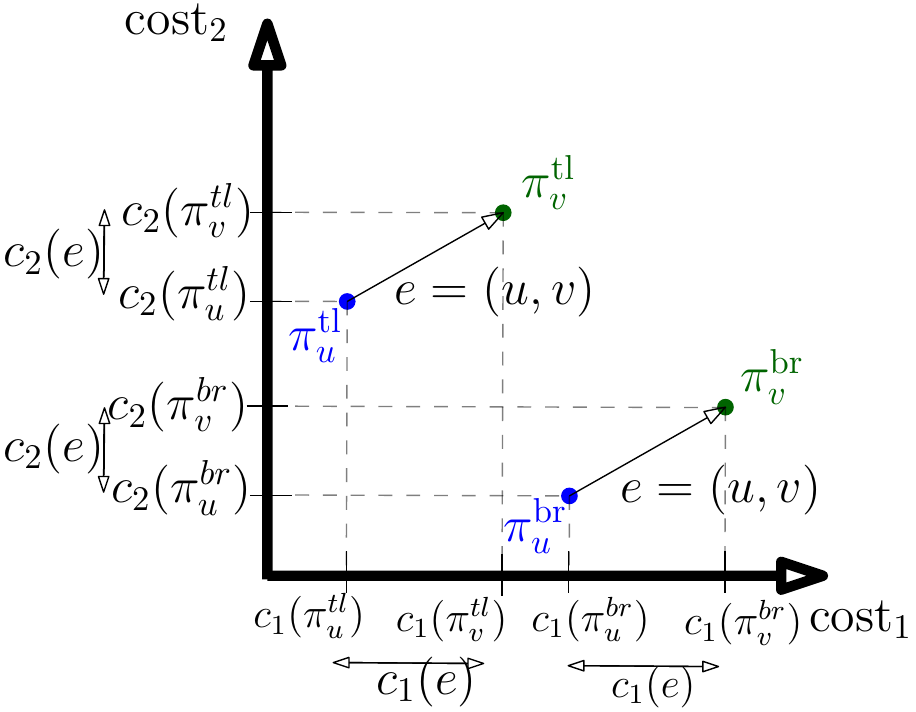}
  }
  \qquad
  \subfloat[]{
    \label{fig:merge}
    \includegraphics[height=5cm]{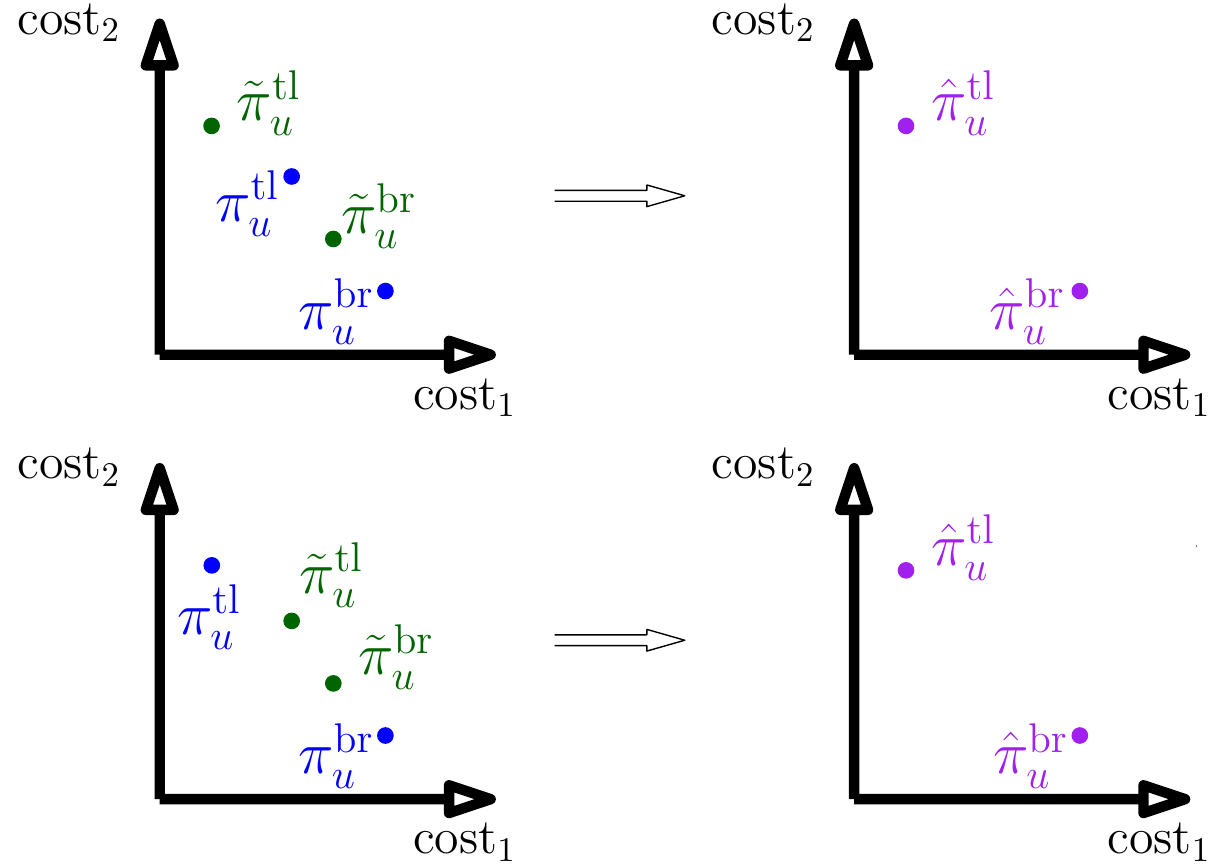}
  }  
  \caption{Operations on path pairs.
  \protect \subref{fig:extend}~Extend operation. 
  The path pair $(\pi_u^{\texttt{tl}}, \pi_u^{\texttt{br}})$ (blue) is extended by edge $e = (u,v)$ to obtain the path pair $(\pi_v^{\texttt{tl}}, \pi_v^{\texttt{br}})$ (green).  
  \protect \subref{fig:merge}~Merge operation. Two examples of merging the path pair $(\pi_u^{\texttt{tl}}, \pi_u^{\texttt{br}})$ (blue) with the path pair $(\tilde{\pi}_u^{\texttt{tl}}, \tilde{\pi}_u^{\texttt{br}})$ (green) to obtain the path pair $(\hat{\pi}_u^{\texttt{tl}}, \hat{\pi}_u^{\texttt{br}})$ (purple).
  }
\label{fig:operations}
\end{figure*} 
In contrast to standard search algorithms which incrementally construct shortest paths from $\vs$ to the graph vertices, our algorithm will incrementally construct $(\eps_1, \eps_2)$-bounded partial Pareto frontiers.
Lemma~\ref{lem:ppf-dom} suggests a method to efficiently represent and maintain these frontiers for any approximation factors $\eps_1$ and $\eps_2$.
Specifically, for a vertex $u$, \is will maintain \myemph{path pairs} corresponding to the extreme paths in partial Pareto frontiers.
For each path pair $(\pi_u^{\texttt{tl}}, \pi_u^{\texttt{br}})$ 
we have that
$c_1 (\pi_u^{\texttt{tl}}) \leq c_1(\pi_u^{\texttt{br}})$
and
$c_2 (\pi_u^{\texttt{tl}}) \geq c_2(\pi_u^{\texttt{br}})$.

Before we explain how path pairs will be used let us define operations on path pairs:
The first operation we consider is \myemph{extending} a
path pair $(\pi_u^{\texttt{tl}}, \pi_u^{\texttt{br}})$  by an edge $e = (u, v)$, which simply corresponds to extending 
both $\pi_u^{\texttt{tl}}$ and $\pi_u^{\texttt{br}}$ by~$e$.
The second operation we consider is \myemph{merging} two
path pairs $(\pi_u^{\texttt{tl}}, \pi_u^{\texttt{br}})$ and $(\tilde{\pi}_u^{\texttt{tl}}, \tilde{\pi}_u^{\texttt{br}})$.
This operation constructs a new path pair $(\hat{\pi}_u^{\texttt{tl}}, \hat{\pi}_u^{\texttt{br}})$ such that
\begin{equation*}
\hat{\pi}_u^{\texttt{tl}} = 
     \begin{cases}
        \pi_u^{\texttt{tl}}       &~\text{if } c_1(\pi_u^{\texttt{tl}}) \leq c_1(\tilde{\pi}_u^{\texttt{tl}}) \\
        \tilde{\pi}_u^{\texttt{tl}}  &~\text{if } c_1(\tilde{\pi_u}^{\texttt{tl}}) < c_1(\pi_u^{\texttt{tl}}),         
     \end{cases}
\end{equation*}
and
\begin{equation*}
\hat{\pi}_u^{\texttt{br}} = 
     \begin{cases}
        \pi_u^{\texttt{br}}       &~\text{if } c_2(\pi_u^{\texttt{br}}) \leq c_2(\tilde{\pi}_u^{\texttt{br}}) \\
        \tilde{\pi}_u^{\texttt{br}}  &~\text{if } c_2(\tilde{\pi_u}^{\texttt{br}}) < c_2(\pi_u^{\texttt{br}}).         
     \end{cases}
\end{equation*}
For a visualization, see Fig.~\ref{fig:operations}.

We are finally ready to describe \is, our algorithm for bi-criteria approximate shortest-path computation (Problem~\ref{prob:2}).
We run a best-first search similar to Alg.~\ref{alg:astar} but nodes are path pairs.
We start with the trivial path pair $(\vs, \vs)$ and describe our algorithm by detailing the different functions highlighted in Alg.~\ref{alg:astar}.
For each function, we describe what needs to be performed and how this can be efficiently implemented when consistent heuristics are used (see Sec.~\ref{sec:background}).
Finally, the pseudocode of the algorithm is provided in Alg.~\ref{alg:is} with the efficient implementations provided in Alg.~\ref{alg:is_dominated}-\ref{alg:merge-IS}.

\begin{algorithm}[t!]
    \textbf{Input: ($G = (V,E), \vs, \vg, c_1, c_2, h_1, h_2, \eps_1, \eps_2$)}
        \begin{algorithmic}[1]
        \State {solutions\_pp$\gets \emptyset$}
        \Comment{path pairs }
        \State {${\rm OPEN}\gets$ new path pair $(\vs, \vs)$} 
        \vspace{1mm}
        \While{${\rm{OPEN}} \neq \emptyset$}
          \State $(\pi_u^{\texttt{tl}}, \pi_u^{\texttt{br}}) \gets$ \rm{OPEN{}.\func{extract\_min}}()
          \If {\func{is\_dominated\_\is}($\pi_u^{\texttt{tl}}, \pi_u^{\texttt{br}}$)}
            \State {\textbf{continue}}
          \EndIf
        \vspace{1mm}
          \If {$u=\vg$} \Comment{reached goal}
            \State {\func{merge\_to\_solutions\_\is}($\pi_u^{\texttt{tl}}, \pi_u^{\texttt{br}}$, solutions\_pp)}
            \State {\textbf{continue}}
          \EndIf
        \vspace{1mm}
          \For{$e=(u,v) \in$ {neighbors}($s(n), G$)}
            \State $\left(\pi_v^{\texttt{tl}}, \pi_v^{\texttt{br}}\right) \gets$ \func{extend\_\is}($(\pi_u^{\texttt{tl}}, \pi_u^{\texttt{br}}), e$)
          \If {\func{is\_dominated\_\is}($\pi_v^{\texttt{tl}}, \pi_v^{\texttt{br}}$)}
            \State {\textbf{continue}}
          \EndIf

        \vspace{1mm}
            \State {\func{insert\_\is}($(\pi_v^{\texttt{tl}}, \pi_v^{\texttt{br}}), \rm{OPEN}$)}
          \EndFor
        
        \EndWhile

       \State{solutions$\gets \emptyset$}
       \For{$(\pi_\vg^{\texttt{tl}}, \pi_\vg^{\texttt{br}}) \in$ solutions\_pp}
         \State solutions $\gets$  solutions $\cup\{  \pi_\vg^{\texttt{tl}} \}$
       \EndFor
       \State{\textbf{return} solutions}
\end{algorithmic}
    \caption{\is}
    \label{alg:is}
\end{algorithm}

\paragraph{Ordering nodes in OPEN:}
Recall that a node is a path pair 
$(\pi_u^{\texttt{tl}}, \pi_u^{\texttt{br}})$ and that each path $\pi$ has two $f$ values which correspond to the two cost functions and the two heuristic functions. 
Nodes are ordered lexicographically according to 
\begin{equation}
  \label{eq:lexi}
  \large(
  f_1(\pi_u^{\texttt{tl}}), 
  f_2(\pi_u^{\texttt{br}})
  \large).
\end{equation}
\paragraph{Domination checks:}
Recall that there are two types of domination checks that we wish to perform
(i)~checking if a node is dominated by a node that was already expanded
and
(ii)~checking if a node has the potential to reach the goal with a solution whose cost is not dominated by any existing solution. 

In our setting a path pair 
$\rm{PP}_u$
is dominated by another path pair
$\tilde{\rm{PP}}_u$
if the 
partial Pareto frontier represented by $\rm{PP}_u$ 
is contained in the 
partial Pareto frontier represented by $\tilde{\rm{PP}}_u$ (see Fig.~\ref{fig:ppf-dominates}).
We can efficiently test if 
$\rm{PP}_u = (\pi_u^{\texttt{tl}}, \pi_u^{\texttt{br}})$
 is dominated by any path to $u$ found so far, by checking if
\begin{equation}
\label{eq:d1}
g_2({\pi}_u^{\texttt{br}}) 
\geq 
g_2^{\rm min}(u).
\end{equation}
This only holds when using the assumption that our heuristic functions are admissible and consistent and using the way we order our OPEN list.

\begin{figure}[tb]
  \centering
  \includegraphics[height=5cm]{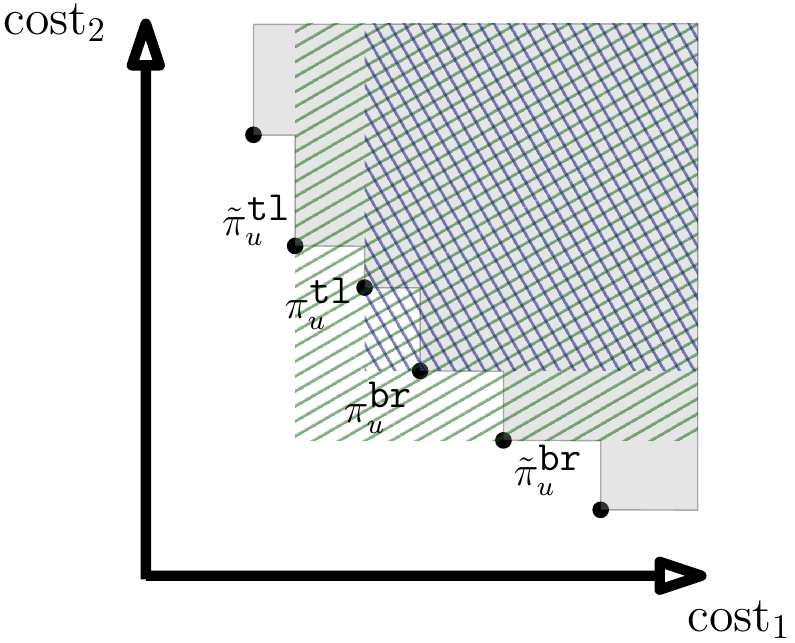}
  \caption{Testing dominance of partial Pareto frontiers using path pairs.
           The partial Pareto frontier $\Pi_u^{\pi_u^{\texttt{tl}}, \pi_u^{\texttt{br}}}$
           is contained in 
           the partial Pareto frontier $\Pi_u^{\tilde{\pi}_u^{\texttt{tl}}, \tilde{\pi}_u^{\texttt{br}}}$.
           Thus, the region represented by 
           ${\pi}_u^{\texttt{tl}}, {\pi}_u^{\texttt{br}}$
           is contained in the region represented by 
           $\tilde{\pi}_u^{\texttt{tl}}, \tilde{\pi}_u^{\texttt{br}}$.
           }
  \label{fig:ppf-dominates}
\end{figure}

We now continue to describe how we test if a path pair has the potential to reach the goal with a solution whose cost is not dominated by any existing solution.
Given a path pair
$\rm{PP}_u = (\pi_u^{\texttt{tl}}, \pi_u^{\texttt{br}})$ 
a lower bound on the partial Pareto frontier at $\vg$ that can be attained via $\rm{PP}_u$ is obtained by adding the heuristic values to the costs of the two paths in $\rm{PP}_u$.
Namely,
we consider two paths $\pi_{\vg}^{\texttt{tl}}, \pi_{\vg}^{\texttt{br}}$ such that
$
c_i(\pi_{\vg}^{\texttt{tl}}):= c_i(\pi_u^{\texttt{tl}}) + h_i(u)
$
and 
$
c_i(\pi_{\vg}^{\texttt{br}}):= c_i(\pi_u^{\texttt{br}}) + h_i(u)
$.
Note that these paths may not be attainable and are a lower bound on the partial Pareto frontier that can be obtained via~$\rm{PP}_u$.
Now, if the partial Pareto frontier 
$\ppf_{\vg}^{\pi_{\vg}^{\texttt{tl}}, \pi_{\vg}^{\texttt{br}}}$
is contained in the union of the currently-computed partial Pareto frontiers at $\vg$, then $\rm{PP}_u$ is dominated.
%
Similar to the previous dominance check, this can be efficiently implemented by testing if
\begin{equation}
\label{eq:d2}
(1 + \eps_2) \cdot (f_2(\pi_u^{\texttt{br}})) \geq g_2^{\rm min}(\vg).
\end{equation}

\paragraph{Inserting nodes in OPEN:}
Recall that we want to use the notion of path pairs to represent a partial Pareto frontier. 
Key to the efficiency of our algorithm is to have every partial Pareto frontier as large as possible under the constraint that they are all $(\eps_1, \eps_2)$-bounded.
Thus, when coming to insert a path pair 
$\rm{PP}_u$ 
into the OPEN list, 
we check if there exists a path pair
$\rm{\tilde{PP}}_u$ such that 
$\rm{PP}_u$ and $\rm{\tilde{PP}}_u$ can be merged and the resultant path pair is still $(\eps_1, \eps_2)$-bounded.

If this is the case, we remove $\rm{\tilde{PP}}_u$ and replace it with the merged path pair.

\paragraph{Merging solutions:}
Since we want to minimize the number of path pairs representing $\Pi_\vg(\eps_1, \eps_2)$ we suggest an optimization that operates similarly to node insertions.
When a new path pair $\rm{PP}_\vg$  representing a partial Pareto frontier at $\vg$ is obtained, we test if there exists a path pair in the solution set
$\rm{\tilde{PP}}_\vg$ such that 
$\rm{PP}_u$ and $\rm{\tilde{PP}}_\vg$ can be merged and the resultant path pair is still $(\eps_1, \eps_2)$-bounded.

If this is the case, we remove $\rm{\tilde{PP}}_\vg$ and replace it with the merged path pair.

\paragraph{Returning solutions:}
Recall that our algorithm stores solutions as path pairs and not individual paths.
To return an approximate Pareto frontier, we simply return one path in each path pair.
Here, we arbitrarily choose to return $\pi_\vg^{\texttt{tl}}$ for each path pair $(\pi_\vg^{\texttt{tl}}, \pi_\vg^{\texttt{br}})$.


\begin{algorithm}[t!]
    \textbf{Input:} ($\rm{PP}_u = (\pi_u^{\texttt{tl}}, \pi_u^{\texttt{br}})$)
    \begin{algorithmic}[1]
      \If{$(1 + \eps_2) \cdot f_2(\pi_u^{\texttt{br}}) \geq g_2^{\rm min}
            (\vg)$}
        \State{\textbf{return} \texttt{true}}
        \Comment{dominated by solution}
      \EndIf
      \If{$g_2({\pi}_u^{\texttt{br}})\geq g_2^{\rm min}(u)$}
        \State{\textbf{return} \texttt{true}}
        \Comment{dominated by existing path pair}
      \EndIf
      \State{\textbf{return} \texttt{false}}

\end{algorithmic}
    \caption{is\_dominated\_\is}
    \label{alg:is_dominated}
\end{algorithm}

\begin{algorithm}[t!]
    \textbf{Input:} ($\rm{PP}_u = (\pi_u^{\texttt{tl}}, \pi_u^{\texttt{br}}) , e = (u,v)$)
    \begin{algorithmic}[1]
      \State{$\pi_v^{\texttt{tl}} \gets $\func{extend}($\pi_u^{\texttt{tl}}$)}
      \State{$\pi_v^{\texttt{br}} \gets $\func{extend}($\pi_u^{\texttt{br}}$)}
      \State{\textbf{return} $(\pi_v^{\texttt{tl}}, \pi_v^{\texttt{br}})$}
\end{algorithmic}
    \caption{extend\_\is}
    \label{alg:extendIS}
\end{algorithm}

\begin{algorithm}[t!]
    \textbf{Input:} ($\rm{PP}_v$, OPEN)
    \begin{algorithmic}[1]
      \For{\textbf{each} path pair $\rm{\tilde{PP}}_v \in$ OPEN}
        \State $\rm{PP}_v^{\rm merged} \gets $ merge($\rm{\tilde{PP}}_v, \rm{{PP}}_v $)
        \If {$\rm{PP}_v^{\rm merged}$.is\_bounded($\eps_1, \eps_2$)}
          \State OPEN.remove($\rm{\tilde{PP}}_v$)
          \Comment{remove existing path pair}
          \State OPEN.\func{insert}($\rm{PP}_v^{\rm merged}$)
          \State \textbf{return}
        \EndIf 
        \EndFor
       \State OPEN.\func{insert}($\rm{PP}_v$)
       \State{\textbf{return}}
\end{algorithmic}
    \caption{insert\_\is}
    \label{alg:insertIS}
\end{algorithm}

\begin{algorithm}[t!]
    \textbf{Input:} ($\rm{PP}_{\vg}$, solutions\_pp)
    \begin{algorithmic}[1]
      \For{\textbf{each} path pair $\rm{\tilde{PP}}_{\vg} \in$ solutions\_pp}
        \State $\rm{PP}_{\vg}^{\rm merged} \gets $ merge($\rm{\tilde{PP}}_{\vg}, \rm{{PP}}_v $)
        \If {$\rm{PP}_{\vg}^{\rm merged}$.is\_bounded($\eps_1, \eps_2$)}
          \State solutions\_pp.remove($\rm{\tilde{PP}}_{\vg}$)
          \State solutions\_pp.insert($\rm{PP}_{\vg}^{\rm merged}$)
          \State \textbf{return}
        \EndIf 
        \EndFor
       \State solutions\_pp.insert($\rm{PP}_{\vg}$)
       \State{\textbf{return}}
\end{algorithmic}
    \caption{merge\_to\_solutions\_\is}
    \label{alg:merge-IS}
\end{algorithm}
\subsection{Analysis}
To show that \is indeed computes an approximate Pareto frontier using the  domination checks suggested in Eq.~\ref{eq:d1} and~\ref{eq:d2}, it will be useful to introduce the notion of a path pair's \myemph{apex} which represents a (possibly non-existent) path that dominates all paths represented by the path pair.
\begin{defn}
Let $\rm{PP}_u = (\pi_u^{\texttt{tl}}, \pi_u^{\texttt{br}})$ be a path pair.
Its apex is a two-dimensional point 
$\calA_u = 
(
c_1(\pi_u^{\texttt{tl}}),
c_2(\pi_u^{\texttt{br}}
)$.
The $f$-value of the apex is 
$(
c_1(\pi_u^{\texttt{tl}}) + h_1(u),
c_2(\pi_u^{\texttt{br}} + h_2(u)
)$
\end{defn}

If we (conceptually) replace each path pair used by \is with its corresponding apex, we obtain an algorithm that is very similar to \boastareps.
Specifically, both algorithms 
(i)~order nodes / apexes in the OPEN list lexicographically,
(ii)~update $g_2^{\rm {min}}(u)$ for each vertex $u$ when a node / apex is popped if $g_2 < g_2^{\rm {min}}(u)$ where~$g_2$ is the $c_2$-cost of the node / apex
and
(iii)~prune a node / apex using the same condition (notice that Eq.~\ref{eq:d0} is identical to Eq~\ref{eq:d1} and~\ref{eq:d2}). 

The main difference is that \is also merges path pairs. 
This turns the search tree into a directed acyclic graph (DAG).

Now we can easily adapt several Lemmas described in~\cite{UYBZSK20}\footnote{When adapting the Lemmas used in~\cite{UYBZSK20}, we mention the corresponding Lemma number. Furthermore, to ease a reader interested in comparing the Lemmas and the corresponding proofs, when possible, we use the same notation and even the same words. This was done after obtaining permission from the authors in~\cite{UYBZSK20}.}:
\begin{lem}[Corresponding to Lemma~1 in~\cite{UYBZSK20}]
\label{lem:1}
After extending an apex, the new apex has $f_1$- and $f_2$-values that are no smaller than the $f_1$- and $f_2$-values, respectively, of the generating apex.
\end{lem}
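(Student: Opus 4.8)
The plan is to unwind the definition of an apex's $f$-value and reduce the claim, coordinate by coordinate, to the consistency of the two heuristics. First I would recall that the apex of $\rm{PP}_u = (\pi_u^{\texttt{tl}}, \pi_u^{\texttt{br}})$ has $f$-value $(c_1(\pi_u^{\texttt{tl}}) + h_1(u),\ c_2(\pi_u^{\texttt{br}}) + h_2(u))$, so that the $f_1$-coordinate is contributed entirely by the top-left path and the $f_2$-coordinate entirely by the bottom-right path. This observation is what makes the two coordinates independent and lets me treat them separately.

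Next, by the extend operation (Alg.~\ref{alg:extendIS}), extending $\rm{PP}_u$ by an edge $e = (u,v)$ extends \emph{both} constituent paths by $e$. Hence $c_1(\pi_v^{\texttt{tl}}) = c_1(\pi_u^{\texttt{tl}}) + c_1(e)$ and $c_2(\pi_v^{\texttt{br}}) = c_2(\pi_u^{\texttt{br}}) + c_2(e)$, so the $f_1$-coordinate of the new apex $\calA_v$ still depends only on the (extended) top-left path and the $f_2$-coordinate only on the (extended) bottom-right path.

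For the $f_1$-coordinate I would write $f_1(\calA_v) = c_1(\pi_u^{\texttt{tl}}) + c_1(e) + h_1(v)$ and compare it to $f_1(\calA_u) = c_1(\pi_u^{\texttt{tl}}) + h_1(u)$; after cancelling the common term $c_1(\pi_u^{\texttt{tl}})$, the desired inequality $f_1(\calA_v) \geq f_1(\calA_u)$ reduces exactly to $h_1(u) \leq c_1(e) + h_1(v)$. This is precisely the consistency condition for $h_1$ along the edge $e=(u,v)$, which holds by the standing assumption that the heuristics are consistent. The identical argument, with $h_2$, the edge cost $c_2(e)$, and the bottom-right path, establishes $f_2(\calA_v) \geq f_2(\calA_u)$.

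There is essentially no obstacle here: the content of the lemma is that, in each coordinate separately, the apex of a path pair behaves exactly like a single node of \boastareps under \astar-style extension, so the textbook monotonicity-of-$f$ argument for consistent heuristics applies verbatim. The only point worth flagging explicitly is that the two apex coordinates are drawn from two \emph{different} paths of the pair; since extension applies the same edge to both paths, the coordinates do not couple, and each of the two inequalities follows from the consistency of its corresponding heuristic on its own.
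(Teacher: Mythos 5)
Your proof is correct and follows essentially the same route as the paper, which simply defers to the consistency-based monotonicity argument of Lemma~1 in the cited \boastar work: each apex coordinate is extended by the corresponding edge cost, and the inequality reduces to $h_i(u) \leq c_i(e) + h_i(v)$. Your explicit remark that the $f_1$- and $f_2$-coordinates are drawn from the two \emph{different} constituent paths, yet decouple because the extend operation applies the same edge to both, is a useful clarification that the paper leaves implicit.
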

The proof is identical to Lemma~1 in~\cite{UYBZSK20}. 
However, it is important to note that an apex can also be created by merging two existing apexes. Here, it is not clear which is the ``parent'' apex and the Lemma does not necessarily hold.
However, we can state the following straightforward obsrevation:

\begin{observation}
\label{obs:1}
Let $\calA_u^{\rm merged}$ by the apex resulting from a merge operation between $\calA_u^1$ and $\calA_u^2$.
Then, the   $f_1$- and $f_2$-values of $\calA_u^{\rm merged}$ cannot be smaller than the $f_1$- and $f_2$-values of both of $\calA_u^1$ and $\calA_u^2$.
Specifically,
$$
  f_1(\calA_u^{\rm merged}) = \min (f_1(\calA_u^1),f_1( \calA_u^2)),
$$
and
$$
  f_2(\calA_u^{\rm merged}) = \min (f_2(\calA_u^1), f_2(\calA_u^2)).
$$
\end{observation}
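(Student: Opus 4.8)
The plan is to unfold the two relevant definitions --- the apex of a path pair and the merge operation --- and then observe that the heuristic term, being evaluated at the common vertex $u$, factors out of a minimum. First I would recall that for a path pair $\rm{PP}_u = (\pi_u^{\texttt{tl}}, \pi_u^{\texttt{br}})$ the $f$-value of its apex satisfies, by definition, $f_1(\calA_u) = c_1(\pi_u^{\texttt{tl}}) + h_1(u)$ and $f_2(\calA_u) = c_2(\pi_u^{\texttt{br}}) + h_2(u)$. In other words, the first coordinate of the apex's $f$-value is governed entirely by the \texttt{tl} path and its second coordinate entirely by the \texttt{br} path.

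Next I would write the merged path pair $(\hat{\pi}_u^{\texttt{tl}}, \hat{\pi}_u^{\texttt{br}})$ explicitly, where $\calA_u^1$ is the apex of $(\pi_u^{\texttt{tl}}, \pi_u^{\texttt{br}})$ and $\calA_u^2$ is the apex of $(\tilde{\pi}_u^{\texttt{tl}}, \tilde{\pi}_u^{\texttt{br}})$. By the definition of the merge operation, its \texttt{tl} component is whichever of the two input \texttt{tl} paths has the smaller $c_1$-cost, so $c_1(\hat{\pi}_u^{\texttt{tl}}) = \min\bigl(c_1(\pi_u^{\texttt{tl}}), c_1(\tilde{\pi}_u^{\texttt{tl}})\bigr)$, and symmetrically its \texttt{br} component is the one with the smaller $c_2$-cost, so $c_2(\hat{\pi}_u^{\texttt{br}}) = \min\bigl(c_2(\pi_u^{\texttt{br}}), c_2(\tilde{\pi}_u^{\texttt{br}})\bigr)$.

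Then I would combine the two facts. Since every path pair involved reaches the same vertex $u$, the heuristic values $h_1(u)$ and $h_2(u)$ are common to both input apexes, so adding $h_1(u)$ to a minimum of costs equals the minimum of the two resulting sums. This yields $f_1(\calA_u^{\rm merged}) = \min\bigl(c_1(\pi_u^{\texttt{tl}}), c_1(\tilde{\pi}_u^{\texttt{tl}})\bigr) + h_1(u) = \min\bigl(f_1(\calA_u^1), f_1(\calA_u^2)\bigr)$, and the identical argument gives the corresponding equality for $f_2$. The ``cannot be smaller than both'' clause is then immediate: a minimum of two quantities equals one of them, and so it is not strictly below both.

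I expect essentially no obstacle here, as the content is purely definitional. The one point worth stating carefully is the factoring of the shared heuristic out of the minimum, which relies crucially on the merge being defined only between path pairs rooted at the \emph{same} vertex $u$ (so that $h_i(u)$ is identical for the two apexes); this is precisely what lets the minimum over costs pass to a minimum over $f$-values.
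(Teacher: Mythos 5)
Your proof is correct and matches the paper's intent: the paper states this observation without proof, treating it as straightforward, and your argument supplies exactly the definitional unpacking it leaves implicit (the merge's \texttt{tl} and \texttt{br} components realize the coordinatewise minima of the costs, and the shared heuristic term $h_i(u)$ factors out of the minimum because both path pairs end at the same vertex $u$). Nothing further is needed.
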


\begin{lem}[Corresponding to Lemma~2 in~\cite{UYBZSK20}]
\label{lem:2}
The sequences of extracted and expanded apexes have monotonically non-decreasing $f_1$-values.
\end{lem}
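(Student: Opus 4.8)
The plan is to mirror the proof of Lemma~2 in~\cite{UYBZSK20}, adapting it to the presence of merges. The first thing I would record is that the lexicographic order used to sort OPEN (Eq.~\ref{eq:lexi}) is exactly the order induced by the $f$-value of a path pair's apex, namely $(f_1(\pi_u^{\texttt{tl}}), f_2(\pi_u^{\texttt{br}}))$. Consequently, \func{extract\_min} always returns the apex with the smallest $f$-value in OPEN, and in particular the extracted apex has $f_1$-value no larger than the $f_1$-value of every other apex currently in OPEN (with ties broken by $f_2$).

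The core of the argument is the invariant: \emph{immediately after an apex $\calA$ with $f_1(\calA) = v$ is extracted and its expansion is completed, every apex remaining in OPEN has $f_1$-value at least $v$.} I would establish this by tracking every operation performed while $\calA$ is expanded. At the moment of extraction the invariant holds for free, since $\calA$ was the lexicographic minimum and hence $f_1 \geq v$ for every other apex in OPEN. During expansion, new apexes enter OPEN only via \func{extend\_\is} followed by \func{insert\_\is}. By Lemma~\ref{lem:1}, an apex $\calA'$ obtained by extending $\calA$ satisfies $f_1(\calA') \geq f_1(\calA) = v$. Inside \func{insert\_\is}, $\calA'$ may instead be merged with an apex $\tilde{\calA}$ already residing in OPEN; by the inductive invariant $f_1(\tilde{\calA}) \geq v$, and by Observation~\ref{obs:1} the merged apex has $f_1$-value $\min(f_1(\calA'), f_1(\tilde{\calA})) \geq v$. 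Thus no operation performed during the expansion of $\calA$ introduces an apex whose $f_1$-value drops below $v$, and the invariant is preserved.

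Granting the invariant, monotonicity is immediate. If apexes are extracted in the order $\calA^{(1)}, \calA^{(2)}, \ldots$, then when $\calA^{(k+1)}$ is extracted it is the lexicographic minimum of an OPEN list all of whose members have $f_1$-value at least $f_1(\calA^{(k)})$, so $f_1(\calA^{(k+1)}) \geq f_1(\calA^{(k)})$. Since the expanded apexes are precisely those extracted apexes that survive the dominance check (\func{is\_dominated\_\is}) and are not associated with $\vg$, they form a subsequence of the extracted apexes and therefore inherit the monotonically non-decreasing $f_1$-values.

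The hard part, and the only genuine departure from the proof in~\cite{UYBZSK20}, will be the merge operation, which turns the search tree into a DAG and breaks the clean parent/child relationship that Lemma~\ref{lem:1} relies on. The resolution is exactly Observation~\ref{obs:1}: because a merged apex's $f_1$-value is the \emph{minimum} of the two source apexes' $f_1$-values rather than some other combination, a merge can never push $f_1$ below the threshold $v$ provided both merged apexes already respect it, which the inductive invariant guarantees. I would also stress that merges occurring in earlier iterations require no special handling, since the base case of the invariant uses only that the extracted apex is the current lexicographic minimum and is indifferent to how the other OPEN entries were produced.
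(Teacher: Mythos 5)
Your proof is correct and follows essentially the same route as the paper's: it combines the fact that the extracted apex is the lexicographic minimum of OPEN with Lemma~\ref{lem:1} for extended apexes and Observation~\ref{obs:1} for merged apexes. The only difference is presentational — you make the underlying invariant (every apex remaining in OPEN has $f_1$-value at least that of the last extracted apex) explicit and argue it inductively, which is a cleaner writeup of the same argument the paper states in one sentence.
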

\begin{proof}
 An apex extracted by \is from the OPEN list  has the smallest $f_1$-value among of all apexes in the Open list.
  Since generated apexes that are
added to the Open list have $f_1$-values that are no smaller than those of their expanded parent apexes (Lemma~\ref{lem:1}) and an apex resulting from a merge operation  cannot have an $f_1$-value smaller than the $f_1$- value of both of the merged apexes (Obs~\ref{obs:1}), the sequence of extracted apexes has monotonically non-decreasing $f_1$-values.  
\end{proof}

\begin{lem}[Corresponding to Lemma~3 in~\cite{UYBZSK20}]
\label{lem:3}
The sequences of extracted apexes with the same state has strictly monotonically decreasing $f_2$-values.
\end{lem}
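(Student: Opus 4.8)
The plan is to reduce the statement to a simple invariant on how $g_2^{\rm min}(u)$ evolves, mirroring the proof of Lemma~3 in~\cite{UYBZSK20}. First I would fix a state (vertex) $u$ and restrict attention to the subsequence of apexes whose state is $u$ that are extracted from OPEN \emph{and} survive the dominance test of Eq.~\ref{eq:d1}~(the apexes that are actually expanded or, when $u=\vg$, merged into the solution set); an extracted apex that is immediately pruned is not counted as part of this subsequence. Writing the extracted apex as the $f$-value of a path pair $\rm{PP}_u = (\pi_u^{\texttt{tl}}, \pi_u^{\texttt{br}})$, its $f_2$-value equals $g_2(\pi_u^{\texttt{br}}) + h_2(u)$ by the definition of the apex, and since $h_2(u)$ is a fixed constant depending only on the state $u$, it suffices to prove that the values $g_2(\pi_u^{\texttt{br}})$ are strictly decreasing along this subsequence. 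Let $g_2^{(i)}$ denote $g_2(\pi_u^{\texttt{br}})$ for the $i$-th such apex.

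The key step is to track $g_2^{\rm min}(u)$. When the $i$-th surviving apex with state $u$ is extracted, the first disjunct of Eq.~\ref{eq:d1} must fail, so $g_2^{(i)} < g_2^{\rm min}(u)$; immediately afterwards the algorithm sets $g_2^{\rm min}(u) \gets g_2^{(i)}$, strictly lowering it. Hence, right after the $i$-th surviving apex is processed, $g_2^{\rm min}(u) = g_2^{(i)}$. When the $(i{+}1)$-th surviving apex of state $u$ is later extracted, it likewise passes Eq.~\ref{eq:d1}, which forces $g_2^{(i+1)} < g_2^{\rm min}(u) = g_2^{(i)}$. This gives the strict decrease of the $g_2$-values, and therefore of the $f_2$-values, as claimed.

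Two points require care, and I would address them explicitly. First, I must argue that $g_2^{\rm min}(u)$ is not altered between two consecutive surviving apexes of state $u$ by any other extracted apex; this is immediate because $g_2^{\rm min}(u)$ is indexed by the vertex $u$ and is only modified when an apex \emph{whose state is $u$} is popped with a strictly smaller $g_2$, and any such apex is itself a surviving apex of state $u$ and would therefore already appear in the subsequence, so $g_2^{\rm min}(u)$ indeed equals $g_2^{(i)}$ at the moment the $(i{+}1)$-th apex is tested. Second, I must check that merging does not break the argument: an apex produced by a merge still has a well-defined state (merges only combine path pairs at the same vertex), and the argument uses only the value $g_2(\pi_u^{\texttt{br}})$ of whichever apex is extracted together with Eq.~\ref{eq:d1} and the update rule for $g_2^{\rm min}$, so Observation~\ref{obs:1} and the DAG structure play no role here. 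The main obstacle is really the delicate interpretation of ``extracted'': strict monotonicity can only hold along the expanded (surviving) subsequence, since an extracted-but-pruned apex may have a $g_2$-value no smaller than that of an earlier surviving apex with the same state. I would therefore state at the outset that the sequence in question is that of the \emph{expanded} apexes, which is exactly what the downstream correctness argument needs.
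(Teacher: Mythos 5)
Your proof is correct and follows essentially the same argument the paper relies on (it defers to Lemma~3 of the \boastar paper): an expanded apex must pass the check $g_2 < g_2^{\rm min}(u)$ of Eq.~\ref{eq:d1} and then lowers $g_2^{\rm min}(u)$ to its own $g_2$-value, which forces strict decrease; you phrase it directly where the original argues by contradiction, but the invariant is identical. Your caveat that the claim holds for the \emph{expanded} (surviving) subsequence rather than all extracted apexes is a fair and correct reading of what the lemma actually needs.
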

The proof is identical to Lemma~3 in~\cite{UYBZSK20}. 

\begin{lem}[Corresponding to Lemma~6 in~\cite{UYBZSK20}]
\label{lem:6}
If 
apex $\calA_u^1$
is weakly dominated by 
apex $\calA_u^2$, 
then each apex at $\vg$ in
the subtree of the search tree rooted at $\calA_u^1$ 
(when no merge operations are performed in the subtree) is weakly
dominated by an apex at $\vg$ in the DAG rooted
at $\calA_u^2$
(even when merge operations are performed).
\end{lem}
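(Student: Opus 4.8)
The plan is to prove the statement by induction, mirroring the proof of Lemma~6 in~\cite{UYBZSK20} but inserting extra cases that account for merges. The structural fact driving everything is that extending an apex by an edge $e=(u,v)$ simply adds the fixed vector $(c_1(e),c_2(e))$ to both of its coordinates, which is immediate from the definitions of the extend operation and of the apex. Consequently, if one apex at a vertex weakly dominates another apex at the same vertex, then extending both by a common edge preserves the weak dominance. I would take as the induction measure the number of edges on the path from $u$ to $\vg$ that realizes the given apex $\calA_{\vg}^1$ inside the \emph{fixed, merge-free} subtree rooted at $\calA_u^1$; this measure is well defined precisely because that subtree contains no merges, so each apex at $\vg$ there is reached by a unique edge sequence.

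For the base case the path has length zero, so $u=\vg$ and $\calA_u^1=\calA_{\vg}^1$ is itself the apex at $\vg$ in the subtree; it is weakly dominated by $\calA_u^2=\calA_{\vg}^2$ directly by hypothesis. For the inductive step, let $e=(u,v)$ be the first edge of the path realizing $\calA_{\vg}^1$ and let $\calA_v^1$ be the corresponding child of $\calA_u^1$. Consider the apex $\calA_v^{2'}$ obtained by extending $\calA_u^2$ along the same edge $e$; by the structural fact above and the hypothesis that $\calA_u^2$ weakly dominates $\calA_u^1$, the apex $\calA_v^{2'}$ weakly dominates $\calA_v^1$. I would then split on what the execution does with $\calA_v^{2'}$. (i)~If it is inserted and later expanded, I apply the inductive hypothesis to the pair $(\calA_v^1,\calA_v^{2'})$ at vertex $v$, whose subtree of $\calA_v^1$ has strictly smaller depth; since $\calA_v^{2'}$ is reachable from $\calA_u^2$ via $e$, the resulting sub-DAG stays inside the DAG rooted at $\calA_u^2$. (ii)~If $\calA_v^{2'}$ is merged with another apex at $v$ into $\calA_v^2$, Observation~\ref{obs:1} gives that $\calA_v^2$ has $f_1$- and $f_2$-values no larger than those of $\calA_v^{2'}$, so $\calA_v^2$ weakly dominates $\calA_v^{2'}$ and hence $\calA_v^1$; as $\calA_v^2$ is still reachable from $\calA_u^2$ through $e$ and the merge, I apply the inductive hypothesis to $(\calA_v^1,\calA_v^2)$. (iii)~If $\calA_v^{2'}$ is pruned by the dominance test of Eq.~\ref{eq:d1}, then an already-expanded apex $\calA_v^2$ at $v$ achieved $g_2^{\rm min}(v)\le g_2(\calA_v^{2'})$, and I apply the inductive hypothesis to $(\calA_v^1,\calA_v^2)$.

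The main obstacle is case~(iii). First, I must verify that the already-expanded apex $\calA_v^2$ genuinely weakly dominates $\calA_v^{2'}$ rather than merely beating it in the second coordinate: the $c_2$-inequality is exactly $g_2(\calA_v^2)=g_2^{\rm min}(v)\le g_2(\calA_v^{2'})$, while for the first coordinate I would use that $\calA_v^2$ was extracted in an earlier iteration than $\calA_u^2$, so Lemma~\ref{lem:2} gives $f_1(\calA_v^2)\le f_1(\calA_u^2)$, and Lemma~\ref{lem:1} gives $f_1(\calA_v^{2'})\ge f_1(\calA_u^2)$; since both apexes sit at $v$ and share the term $h_1(v)$, chaining these yields $g_1(\calA_v^2)\le g_1(\calA_v^{2'})$, establishing weak dominance. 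Second, I must argue well-foundedness: although case~(iii) re-roots the recursion at an apex $\calA_v^2$ that need not descend from $\calA_u^2$, the driving measure is the depth inside the single fixed merge-free subtree rooted at $\calA_u^1$, which strictly decreases at every step regardless of which apex at $v$ carries the argument, so the induction terminates and by transitivity produces a weakly dominating apex at $\vg$. Here I would be explicit that in case~(iii) the dominating goal apex is furnished by the DAG rooted at the pruning apex, so ``rooted at $\calA_u^2$'' is obtained through a finite chain of such re-rootings; finally I would note that the goal-potential test of Eq.~\ref{eq:d2} is orthogonal to this structural claim and is deferred to the separate argument on correctness of the approximation.
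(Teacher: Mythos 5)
There is a genuine gap, and it stems from a mismatch between what you prove and what the lemma asserts. The lemma places the dominating goal apex \emph{in the DAG rooted at $\calA_u^2$}. Your case~(iii) abandons that: when the extension $\calA_v^{2'}$ of $\calA_u^2$ is pruned by Eq.~\ref{eq:d1}, you restart the induction at an already-expanded apex $\calA_v^2$ that need not be a descendant of $\calA_u^2$ at all, and you concede that the final goal apex is only reachable ``through a finite chain of such re-rootings.'' That is a different (and weaker) statement than the one to be proved. The deeper issue is that you have imported the actual execution --- insertion, merging, and in particular \emph{pruning} --- into a lemma that is meant to be purely structural. In the paper (following Lemma~6 of \cite{UYBZSK20}), the subtree rooted at $\calA_u^1$ and the DAG rooted at $\calA_u^2$ are the \emph{conceptual} objects closed under edge extension; the interaction with pruning is deliberately deferred to Lemma~\ref{lem:7}, which performs its own induction over the number of pruned apexes and invokes Lemma~\ref{lem:6} as a black box. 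By handling pruning inside Lemma~\ref{lem:6} you both complicate the argument and fail to land on the stated conclusion.

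The paper's proof is a one-step grafting argument, with no induction and no case analysis: let $\pi = u_i,\ldots,u_k$ be the vertex sequence from $\calA_u^1$ to the given goal apex $\calA_{\vg}^3$ in the merge-free subtree, and let $\calA_{\vg}^4$ be the apex obtained by extending $\calA_u^2$ along the same sequence inside the DAG rooted at $\calA_u^2$. Since costs are additive along the common suffix, $g_i(\calA_{\vg}^4) = g_i(\calA_u^2) + c_i(\pi) \leq g_i(\calA_u^1) + c_i(\pi) = g_i(\calA_{\vg}^3)$ for $i \in \{1,2\}$, which is exactly weak dominance. Your structural observation that extension by an edge adds the same cost vector to both apexes is the right ingredient --- applied to the whole suffix at once it closes the proof immediately; applied one edge at a time, together with a case split on the execution, it leads you into the re-rooting problem. (Your cases~(i) and~(ii) are fine as far as they go, and your use of Lemmas~\ref{lem:1} and~\ref{lem:2} in case~(iii) to get $g_1(\calA_v^2) \leq g_1(\calA_v^{2'})$ is sound, but none of this machinery is needed, and case~(iii) cannot be repaired without changing the statement of the lemma.)
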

\begin{proof}
Since apex $\calA_u^1$
is weakly dominated by 
apex $\calA_u^2$,
$g_1(\calA_u^1) \leq g_1(\calA_u^2)$ 
(here, given apex $\calA = (p_1,p_2)$ we define  $g_i(\calA): = p_i$ corresponding to the $g$-value of nodes in standard \astar-like algorithms). 
Assume that $\calA_{\vg}^3$ is an apex at the goal in the subtree of the search tree rooted at~$\calA_u^1$ (when no merge operations are performed). 
Let the sequence of vertices of the apexes along a branch of the search tree from the root apex to $\calA_u^1$ be 
$u_1, \ldots u_i$
(with $u_1 = \vs$ and $u_i = u$).
Similarly, let the sequence of vertices of the apexes  along a branch of the DAG from the root apex to $\calA_u^2$ be 
$u_1', \ldots u_j'$
(with $u_1' = \vs$ and $u_j' = u$).
Finally, let the sequence of vertices of the apexes  along a branch of the search tree from $\calA_u^1$ to $\calA_{\vg}^3$ be
$\pi = u_i, \ldots u_k$
(with $u_k = \vg$).

Then, there is an apex $\calA_{\vg}^4$ at the goal in the DAG rooted at apex $\calA_u^2$ such that the sequence of vertices of the apexes along a branch of the DAG from
the root apex to~$\calA_{\vg}^4$ is 
$u_1', \ldots u_j', u_{i+1}, \ldots ,u_k$.
Since
$g_1(\calA_u^1) \leq g_1(\calA_u^2)$, it follows that 
\begin{align*}
g_1(\calA_{\vg}^4) 
  ~= &~ 
  g_1(\calA_u^2)+c_1(\pi) \\
  ~\leq &~
  g_1(\calA_u^1) + c_1(\pi) \\
  ~= &~ g_1(\calA_{\vg}^3).
\end{align*}
Thus,  
$g_1(\calA_{\vg}^4) \leq g_1(\calA_{\vg}^3)$.
Following similar lines yields that
$g_2(\calA_{\vg}^4) \leq g_2(\calA_{\vg}^3)$.
Thus, $\calA_{\vg}^4$  weakly dominates $\calA_{\vg}^3$.
\end{proof}

The following Lemma concludes the building blocks we will need to prove that \is computes an approximate Pareto frontier. However, it can be easily adapted to show that \boastar also computes an approximate Pareto frontier.

\begin{lem}[Corresponding to Lemma~7 in~\cite{UYBZSK20}]
\label{lem:7}
When \is prunes an apex $\calA_u^1$ at state $u$  and this prevents it in the future from adding an apex $\calA_{\vg}^2$ (at the goal state) to the solution set, then it can still add in the future an apex (with the goal state) that approximately dominates apex $\calA_{\vg}^2$.
\end{lem}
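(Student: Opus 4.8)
The plan is to follow the structure of the pruning argument from~\cite{UYBZSK20}, but to carefully track the two distinct pruning conditions that \is uses (Eq.~\ref{eq:d1} and Eq.~\ref{eq:d2}) and to fold the approximation slack into the bound. First I would set up the situation: \is prunes apex $\calA_u^1$ when \func{is\_dominated\_\is} returns \texttt{true}, which happens for one of two reasons. Either (i)~$g_2(\pi_u^{\texttt{br}}) \geq g_2^{\rm min}(u)$, meaning some previously-extracted apex $\calA_u^2$ at the same state $u$ already achieved $g_2(\calA_u^2) \leq g_2(\calA_u^1)$; or (ii)~$(1+\eps_2)\cdot f_2(\pi_u^{\texttt{br}}) \geq g_2^{\rm min}(\vg)$, meaning an existing solution apex already $\eps_2$-covers everything reachable through $\calA_u^1$ in the second coordinate. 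I would handle these two cases separately, since they correspond to pruning-by-path-pair and pruning-by-solution respectively.

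For case~(i), the key is that the pruning apex $\calA_u^2$ was extracted earlier and hence (by Lemma~\ref{lem:2} and Lemma~\ref{lem:3}) satisfies $f_1(\calA_u^2) \leq f_1(\calA_u^1)$, while the pruning condition gives $g_2(\calA_u^2) \leq g_2(\calA_u^1)$. Using consistency of the heuristic, these combine to show $\calA_u^2$ weakly dominates $\calA_u^1$. Then I would invoke Lemma~\ref{lem:6}: since $\calA_u^2$ weakly dominates $\calA_u^1$, every apex at $\vg$ in the subtree rooted at $\calA_u^1$ — in particular $\calA_{\vg}^2$ — is weakly dominated by some apex at $\vg$ in the DAG rooted at $\calA_u^2$. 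Because $\calA_u^2$ was expanded (not pruned), that descendant apex at $\vg$ is either added to the solution set or itself merged into / pruned by a solution apex; in either case the solution set ends up containing an apex that weakly (hence $(\eps_1,\eps_2)$-) dominates $\calA_{\vg}^2$. The only subtlety here is that $\calA_u^2$ may itself later be pruned before reaching $\vg$, so I would argue inductively — chaining through whichever apex does the pruning at each stage — that the domination relation is preserved transitively down to a surviving solution apex.

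For case~(ii), the relevant inequality is $(1+\eps_2)\cdot f_2(\calA_u^1) \geq g_2^{\rm min}(\vg)$, where $g_2^{\rm min}(\vg)$ is the smallest $c_2$-value of any solution apex found so far, witnessed by some solution apex $\calA_{\vg}^5$. Since $f_2$ is an admissible lower bound, any apex $\calA_{\vg}^2$ reachable through $\calA_u^1$ satisfies $c_2(\calA_{\vg}^2) \geq f_2(\calA_u^1) \geq g_2^{\rm min}(\vg)/(1+\eps_2) = c_2(\calA_{\vg}^5)/(1+\eps_2)$, so $c_2(\calA_{\vg}^5) \leq (1+\eps_2)\cdot c_2(\calA_{\vg}^2)$. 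For the first coordinate, monotonicity of extracted $f_1$-values (Lemma~\ref{lem:2}) gives that the solution $\calA_{\vg}^5$, found earlier, has $f_1(\calA_{\vg}^5) \leq f_1(\calA_u^1) \leq c_1(\calA_{\vg}^2)$ (since $\calA_{\vg}^2$ lies below $\calA_u^1$ in the search, its $f_1 = c_1$ value is at least $f_1(\calA_u^1)$). Hence $c_1(\calA_{\vg}^5) \leq c_1(\calA_{\vg}^2) \leq (1+\eps_1)\cdot c_1(\calA_{\vg}^2)$, and $\calA_{\vg}^5$ therefore $(\eps_1,\eps_2)$-dominates $\calA_{\vg}^2$.

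The main obstacle I anticipate is case~(i) combined with the possibility of repeated pruning: the dominating apex $\calA_u^2$ is only guaranteed to have been \emph{extracted}, not to have survived all the way to $\vg$, and merge operations turn the search tree into a DAG, so the clean "subtree" language of Lemma~\ref{lem:6} must be applied repeatedly along a chain of dominating apexes. I would address this by a well-founded induction on the order of extraction (which is monotone in $f_1$ by Lemma~\ref{lem:2}), arguing that each time a descendant-at-$\vg$ would be pruned, the pruning witness is itself a solution or an extracted apex that continues to dominate $\calA_{\vg}^2$ up to the $(\eps_1,\eps_2)$ factor; the induction terminates because $f_1$-values are non-decreasing and bounded. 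Verifying that the approximation slack is introduced exactly once (and not compounded along the chain) — i.e., that repeated case-(i) prunings stay \emph{exact} weak-dominations and only the final case-(ii) pruning spends the $(1+\eps_2)$ factor — is the delicate bookkeeping step that makes the bound come out to $(\eps_1,\eps_2)$ rather than something that degrades with path length.
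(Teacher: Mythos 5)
Your proposal follows essentially the same route as the paper's proof: induction over the pruned apexes, a case split on the two pruning conditions, weak domination plus Lemma~\ref{lem:6} for the condition of Eq.~\ref{eq:d1}, and a direct $(\eps_1,\eps_2)$-domination against the incumbent solution for the condition of Eq.~\ref{eq:d2}. The one concrete omission is that an apex can also be pruned at \emph{generation} time (line~12 of Alg.~\ref{alg:is}), before it is ever extracted; there your appeal to Lemma~\ref{lem:2} to obtain $f_1(\calA_u^4)\le f_1(\calA_u^1)$ does not apply to $\calA_u^1$ directly, and you must route the inequality through the parent apex $\calA_v^3$ that was just expanded, combining $f_1(\calA_u^1)\ge f_1(\calA_v^3)$ (Lemma~\ref{lem:1}) with $f_1(\calA_v^3)\ge f_1(\calA_u^4)$ (Lemma~\ref{lem:2}); this is exactly the paper's separate third case. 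Your closing worry about the approximation slack compounding along a chain of prunings is well placed, and it resolves the way you guess: a pruning under Eq.~\ref{eq:d2} is always measured against an apex that witnesses $g_2^{\rm min}(\vg)<\infty$ and has therefore already been committed to the solution set, so the $(1+\eps_2)$ factor is spent at most once per chain, while prunings under Eq.~\ref{eq:d1} propagate exact weak domination through Lemma~\ref{lem:6} and the induction hypothesis. Also note that your claim in case~(i) that the surviving solution apex \emph{weakly} dominates $\calA_{\vg}^2$ is slightly too strong as stated (the descendant at $\vg$ may itself be blocked only up to the $\eps_2$ factor); your subsequent inductive phrasing already repairs this, so only the wording needs care.
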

\begin{proof}
We prove the statement by induction on the number of pruned apexes so far, including apex $\calA_u^1$. If the
number of pruned apexes is zero, then the lemma trivially
holds. Now assume that the number of pruned apexes is $n+1$ and the lemma holds for $n \geq 0$. We distinguish three cases:
\begin{itemize}
\item[\textbf{C1}]
\label{case:1}
\is prunes apex $\calA_u^1$ on line~5 because of the first
pruning condition (Eq.~\ref{eq:d1}): 
Then, \is has expanded an apex~$\calA_u^4$ at state $u$ previously such that 
$g_2^{\rm {min}}(u) = g_2(\calA_u^4)$
since otherwise 
$g_2^{\rm {min}}(u) = \infty$
and the pruning condition could not hold. 
Combining both (in)equalities yields 
$g_2(\calA_u^1) \geq  g_2(\calA_u^4)$.
Since 
$f_1(\calA_u^1) \geq f_1(\calA_u^4)$
(Lemma~\ref{lem:2}),
\begin{align*}
  g_1(\calA_{u}^1) + h_1(u) 
  ~= &~ 
  f_1(\calA_u^1) \\
  ~\geq &
  f_1(\calA_u^4) \\
  ~= &~ 
  g_1(\calA_{u}^4) + h_1(u).
\end{align*}
Thus 
$g_1(\calA_{u}^1) \geq g_1(\calA_{u}^4)$. 
Combining both inequalities yields that apex $\calA_{u}^1$ is weakly dominated by apex $\calA_{u}^4$ and thus each apex at $\vg$ in the subtree rooted at $\calA_{u}^1$, including~$\calA_{\vg}^2$, is weakly dominated (and hence approximately dominated) by an apex $\calA_{\vg}^5$ in the subtree rooted at apex $\calA_{u}^4$ (Lemma~\ref{lem:6}). 
In case \is has pruned an apex that prevents it in the future from adding apex $\calA_{u}^5$ to the solution set, then it can still add in the future an apex 
(at the goal state) that approximately dominates~$\calA_{u}^5$ and thus also apex $\calA_{u}^2$ (induction assumption).

\item[\textbf{C2}] 
\label{case:2}
\is prunes apex $\calA_u^1$ on line~5 because of the second
pruning condition (Eq.~\ref{eq:d2}):
Then, \is has expanded an apex $\calA_{\vg}^4$  with the goal state previously such that 
$g_2^{\rm {min}}(\vg) = g_2(\calA_{\vg}^4)$
since otherwise 
$g_2^{\rm {min}}(\vg) = \infty$
and the pruning condition could not hold. 
Combining both (in)equalities yields that 
$(1 + \eps_2) \cdot f_2(\calA_u^1) \geq g_2(\calA_{\vg}^4)$.
Since $\calA_u^1$ is an ancestor of $\calA_{\vg}^2$ in the search tree, 
$f_2(\calA_{\vg}^2) \geq f_2(\calA_u^1)$ (Lemma~\ref{lem:1}). 
Combining both inequalities yields 
$g_2(\calA_{\vg}^2) = f_2(\calA_{\vg}^2) \geq g_2(\calA_{\vg}^4) / (1 + \eps_2)$.
Since $\calA_{u}^1$ is an ancestor of $\calA_{\vg}^2$ in the search tree, 
$g_1(\calA_{\vg}^1) = f_1(\calA_{\vg}^2) \geq f_1(\calA_u^1)$ (Lemma~\ref{lem:1}).
Since $f_1(\calA_u^1) \geq f_1(\calA_{\vg}^4)$ (Lemma~\ref{lem:2}), it follows that 
$g_1(\calA_{\vg}^2) \geq f_1(\calA_u^1) \geq f_1(\calA_{\vg}^4) = g_1(\calA_{\vg}^4)$. 
Combining 
$g_1(\calA_{\vg}^2) \geq g_1(\calA_{\vg}^4)$
and 
$g_2(\calA_{\vg}^2) \geq g_2(\calA_{\vg}^4) / (1 + \eps_2)$.
yields that 
$\calA_{\vg}^2$ 
is approximately dominated by 
$\calA_{\vg}^4$. 
In case \is has pruned an apex that prevents it in the future from adding $\calA_{\vg}^4$ to the solution set, then it can still add in the future an apex that approximately dominates $\calA_{\vg}^4$ and thus also $\calA_{\vg}^2$(induction assumption).
\item[\textbf{C3}] 
\is prunes apex $\calA_u^1$ on line~12 because of either the first or the second pruning condition (Eq.~\ref{eq:d1} or~\ref{eq:d2}):
The proofs of Case~\textbf{C1} or Case~\textbf{C2}, respectively, apply unchanged except that 
$f_1(\calA_u^1) \geq f_1(\calA_{\vg}^4)$
 now holds for a different reason. 
Let $\calA_v^3$ be the apex that \is expands when it executes Line 12. 
Combining 
$f_1(\calA_u^1) \geq f_1(\calA_v^3)$ (Lemma~\ref{lem:1}) and $f_1(\calA_v^3) \geq f_1(\calA_{\vg}^4)$ (Lemma~\ref{lem:2}) yields
$f_1(\calA_u^1) \geq f1(\calA_{\vg}^4)$.
\end{itemize}
\end{proof}

Lemma~\ref{lem:7} states that all solutions in the partial Pareto frontier captured by a path pair that was pruned by \is are approximately dominated by an apex of a path pair that was used as a solution.
Combining this with the fact that all path pairs are $(\eps_1, \eps_2)$-bounded by constructed we obtain the following Corollary.

\begin{cor}
  Given a Bi-criteria approximate shortest path (Problem~\ref{prob:2}), \is returns an approximate Pareto frontier.
\end{cor}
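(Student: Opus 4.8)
The plan is to verify that the returned set $S = \{\pi_\vg^{\texttt{tl}} \mid (\pi_\vg^{\texttt{tl}}, \pi_\vg^{\texttt{br}}) \in \text{solutions\_pp}\}$ satisfies the two requirements in the definition of an approximate Pareto frontier: (i)~$S \subseteq \Pi_{\vg}$, and (ii)~every path in $\Pi_{\vg}$ is $(\eps_1,\eps_2)$-dominated by some path in $S$. Requirement~(i) I would dispatch quickly: each returned $\pi_\vg^{\texttt{tl}}$ is a genuine path to $\vg$ and is retained only if it survives the \emph{exact} first dominance test (the first test in Alg.~\ref{alg:is_dominated}, Eq.~\ref{eq:d1}), which is inherited unchanged from \boastar; hence, by the standard \boastar optimality argument under the lexicographic ordering, the extreme paths reaching $\vg$ lie on $\Pi_{\vg}$. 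The substance of the proof is the coverage requirement~(ii).

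For coverage I would first invoke the correspondence, set up just before Lemma~\ref{lem:1}, between \is acting on apexes and \boastareps acting on nodes: when no merges occur, an apex is exactly a \boastar node, and the OPEN ordering, the update of $g_2^{\rm min}$, and the pruning tests all coincide. Since \boastar (\ie the case $\eps_1=\eps_2=0$) computes the exact frontier, the idealized unpruned apex search generates, for every $\pi \in \Pi_{\vg}$, a goal apex whose apex point equals $(c_1(\pi), c_2(\pi))$. The remaining task is to track what \is actually retains once pruning and merging are enabled.

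This is precisely what Lemma~\ref{lem:7} supplies: whenever pruning an apex would prevent some goal apex $\calA_{\vg}^2$ from entering the solution set, \is still ends up with a solution apex that approximately dominates $\calA_{\vg}^2$. Applying this with $\calA_{\vg}^2$ the apex of a fixed Pareto-optimal path $\pi$, I obtain a solution path pair whose apex approximately dominates $(c_1(\pi), c_2(\pi))$. The final link is the conversion from apex to returned path: since every solution path pair is $(\eps_1,\eps_2)$-bounded by construction (enforced by the boundedness test in Alg.~\ref{alg:merge-IS}), Lemma~\ref{lem:ppf-dom} lets me replace the possibly-fictitious apex by its concrete extreme path $\pi_\vg^{\texttt{tl}} \in S$ while controlling the incurred error, yielding a member of $S$ that $(\eps_1,\eps_2)$-dominates $\pi$.

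The main obstacle — and the step I would spend the most care on — is the bookkeeping of the approximation factor through this chain, because two $\eps_2$-sized sources of slack land on the \emph{same} ($c_2$) coordinate: the one inside Lemma~\ref{lem:7}'s solution-pruning case~(\textbf{C2}), and the one introduced when Lemma~\ref{lem:ppf-dom} trades a bounded path pair's apex for its concrete extreme path (the $c_1$-coordinate, by contrast, is defined by $\pi^{\texttt{tl}}$ and carries no boundedness slack). A naive composition therefore only certifies domination with the compounded factor $(1+\eps_2)^2$ rather than $(1+\eps_2)$; note that the dominated-by-existing-path-pair case~(\textbf{C1}) is weak in $c_2$ and composes cleanly. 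I would first confirm, using the monotonicity of extracted $f_1$- and $f_2$-values (Lemmas~\ref{lem:2} and~\ref{lem:3}) together with Observation~\ref{obs:1}, that the inductive chain of prunings inside Lemma~\ref{lem:7} re-anchors every comparison to the fixed goal apex $(c_1(\pi), c_2(\pi))$ and so does not accumulate slack beyond these two applications. To then recover exactly the claimed factor I would run the internal tests with a tightened $\eps_2'$ chosen so the product of the two slacks meets the target (for instance $(1+\eps_2')^2 \le 1+\eps_2$), in the same spirit as the $(1+\eps)^{1/L}$ correction noted in Sec.~\ref{sec:intro}; absent such tightening, the honest conclusion is that \is returns an approximate Pareto frontier for the slightly weaker factor $(1+\eps_2)^2 - 1$.
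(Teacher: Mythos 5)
Your proposal follows essentially the same route as the paper: the paper's entire justification for this corollary is the single sentence preceding it, which combines Lemma~\ref{lem:7} (every goal apex lost to pruning is approximately dominated by a solution apex) with the fact that all path pairs are $(\eps_1,\eps_2)$-bounded by construction; the apex-to-$\pi_\vg^{\texttt{tl}}$ conversion via Lemma~\ref{lem:ppf-dom} that you spell out is exactly the implicit final step. Your handling of requirement~(i) and your reduction of coverage to Lemma~\ref{lem:7} are both consistent with the paper's intent.

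The compounding of the $\eps_2$ slack that you flag is, as far as I can verify, a genuine issue that the paper's one-sentence argument does not resolve --- so the gap you found is in the paper's proof, not in yours. Concretely: if a path pair whose partial Pareto frontier contains $\pi \in \Pi_\vg$ is pruned by Eq.~\ref{eq:d2}, the relevant solution apex $\calA_\vg^4 = (c_1(\pi_\vg^{\texttt{tl},4}), c_2(\pi_\vg^{\texttt{br},4}))$ satisfies only $c_2(\pi_\vg^{\texttt{br},4}) \leq (1+\eps_2)\, c_2(\pi)$, while the path actually returned is $\pi_\vg^{\texttt{tl},4}$, for which boundedness gives $c_2(\pi_\vg^{\texttt{tl},4}) \leq (1+\eps_2)\, c_2(\pi_\vg^{\texttt{br},4}) \leq (1+\eps_2)^2\, c_2(\pi)$. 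Your accounting of why no further slack accrues is also right: case~\textbf{C1} contributes only weak dominance (which composes with approximate dominance at no cost), and case~\textbf{C2} terminates the induction because the apex realizing $g_2^{\rm min}(\vg)$ was already expanded at the goal and hence added (possibly after merging, which only improves the apex). So the two $(1+\eps_2)$ factors you identify are the only ones, and without modification the corollary as literally stated holds only for the factor $(1+\eps_2)^2 - 1$ on the $c_2$ coordinate; either of your remedies (tightening the internal tolerance so the slacks multiply out to $1+\eps_2$, or restating the guarantee) closes the gap. One further small caveat you wave at rather than prove: with $\eps > 0$ the returned $\pi_\vg^{\texttt{tl}}$ paths need not actually lie on $\Pi_\vg$, so requirement~(i) of the paper's own definition of $\Pi_\vg(\eps_1,\eps_2)$ (that it be a \emph{subset} of the Pareto frontier) is itself questionable; the paper does not address this either.
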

\ignore{
To show that \is indeed computes a approximate Pareto frontier we start by stating the following simple-yet-important observations:

\begin{observation}
\label{obs:pp-valid}
If
$\rm{PP}_u $ is a path pair generated by \is (either by extending an existing path pair or by merging two existing path pairs)
then $\rm{PP}_u$ is $(\eps_1, \eps_2)$-bounded.
\end{observation} 

\begin{observation}
\label{obs:ppa-valid}
If \is prunes a path pair $\rm{PP}_u $ such that any path that is approximately dominated by the $\rm{PP}_u $ is is a path pair generated by \is (either by extending an existing path pair or by merging two existing path pairs)
then $\rm{PP}_u$ is $(\eps_1, \eps_2)$-bounded.
\end{observation}

To show that we indeed can use the $O(1)$ dominance test suggested by Hernandez et al. we can use exactly the same reasoning.

Specifically, Hernandez et al. show that (LX and TY refers to Lemma X and Theorem Y in~\cite{UYBZSK20}, respectively).
\begin{itemize}
  \item[L1] Each generated node $n$ has $f_1$- and $f_2$-values that are no smaller than the $f_1$- and $f_2$-values-values, respectively, of its parent node $p$. 
  
  \item[L2] The sequences of extracted nodes and of expanded nodes have monotonically non-decreasing $f_1$-values.
  
  \item[L3] The sequence of expanded nodes with the same state has strictly monotonically decreasing $f_2$-values.
  
  \item[L4] The sequence of expanded nodes with the same state has strictly monotonically increasing $f_1$-values.
  
  \item[L5] Expanded nodes with the same state do not weakly dominate each other.
  
  \item[L6] If node $n_1$ with state $u$ is weakly dominated by node $n_2$ with state $u$, then each node with the goal state in the subtree of the search tree rooted at node $n_1$ is weakly dominated by a node with the goal state in the subtree rooted at node $n_2$.
  
  \item[L7] When \boastar prunes a node $n_1$ with state $u$ and this prevents it in the future from adding a node $n_2$ (with the goal state) to the solution set, then it can still add in the future a node (with the goal state) that weakly dominates node~$n_2$.

 \item[T2] \boastar computes a cost-unique Pareto-optimal solution set.
\end{itemize}

Each of the above Lemmas and Theorem can be easily adapted to our setting where we order nodes in the open list according to Eq.~\ref{eq:lexi}.

Before we do this, we mention this simple-yet-important property of path pairs:
\begin{prop}
\label{prop:pp}
 Let 
  $\rm{PP}_u = (\pi_u^{\texttt{tl}}, \pi_u^{\texttt{br}})$ be a path pair then
$$f_1(\pi_u^{\texttt{tl}}) \leq f_1(\pi_u^{\texttt{br}})$$
and
$$f_2(\pi_u^{\texttt{tl}}) \geq f_2(\pi_u^{\texttt{br}}).$$
In addition, $\pi_u^{\texttt{tl}}$ \myemph{cannot} strictly dominate $\pi_u^{\texttt{br}}$.
\end{prop}

We are now ready to describe the adapted Lemmas:
\begin{lem}[Corresponding to Lemma~1 in~\cite{UYBZSK20}]
\label{lem:1}
Each path in each generated path pair \rm{PP} has $f_1$- and $f_2$-values that are no smaller than the $f_1$- and $f_2$-values-values, respectively, of their respective parents in the path pair that generated \rm{PP}.
\end{lem}
\begin{proof}
  Let 
  $\rm{PP}_v = (\pi_v^{\texttt{tl}}, \pi_v^{\texttt{br}})$ be a path pair generated by extending the path pair
  $\rm{PP}_u = (\pi_u^{\texttt{tl}}, \pi_u^{\texttt{br}})$.
  Since the $h-$ values are consistent, 
  $$
  c_1(\pi_u^{\texttt{tl}}, 
      \pi_v^{\texttt{tl}}) 
      + 
  h_1(\pi_v^{\texttt{tl}})
      \geq 
  h_1(\pi_u^{\texttt{tl}}).
  $$
  Thus,
  \begin{align*}
    f_1( \pi_v^{\texttt{tl}}) 
    ~= &~
    c_1(\pi_v^{\texttt{tl}}) + h_1(\pi_v^{\texttt{tl}}) \\
    ~= &~
    c_1(\pi_u^{\texttt{tl}}) +
    c_1(\pi_u^{\texttt{tl}}, 
      \pi_v^{\texttt{tl}}) 
      + 
    h_1(\pi_v^{\texttt{tl}}) \\
    ~\geq &~
    c_1(\pi_u^{\texttt{tl}}) +
    h_1(\pi_u^{\texttt{tl}}) \\
    ~= &~
    f_1( \pi_u^{\texttt{tl}})     
  \end{align*}
A similar proof holds for yields that
$f_1( \pi_v^{\texttt{br}})  \geq f_1( \pi_u^{\texttt{br}})$,
$f_2( \pi_v^{\texttt{tl}})  \geq f_2( \pi_u^{\texttt{tl}})$,
and that
$f_2( \pi_v^{\texttt{br}})  \geq f_2( \pi_u^{\texttt{br}})$.
\end{proof}

\begin{lem}[Corresponding to Lemma~2 in~\cite{UYBZSK20}]
\label{lem:2}
The sequences of extracted path pairs and of expanded path pairs have monotonically non-decreasing $f_1$-values of their top-left paths.
\end{lem}
\begin{proof}
  \is extracts the path pair from the Open list according to Eq.~\ref{eq:lexi}.
  Thus, an extracted path pair has the smallest $f_1$-value of its top-left path
of all nodes in the Open list.
  Since generated path pairs that are
added to the Open list have $f_1$-values that are no smaller than those of their expanded parent nodes (Lemma~\ref{lem:1}), the sequence of extracted nodes has monotonically non-decreasing $f_1$-values of their top-left path.  
\end{proof}

\begin{lem}[Corresponding to Lemma~3 in~\cite{UYBZSK20}]
\label{lem:3}
The sequences of extracted path pairs with the same state has strictly monotonically decreasing $f_2$-values of their bottom-right paths.
\end{lem}
\begin{proof}
Assume for a proof by contradiction that
\is expands path pair 
$\rm{PP}_u = (\pi_u^{\texttt{tl}}, \pi_u^{\texttt{br}})$ 
before path pair 
$\tilde{\rm{PP}}_u = (\tilde{\pi}_u^{\texttt{tl}}, \tilde{\pi}_u^{\texttt{br}})$, that it expands no path pair with state $u$ after $\rm{PP}_u$  and before $\tilde{\rm{PP}}_u$, 
and that 
$f_2(\pi_u^{\texttt{br}}) \leq f_2(\tilde{\pi}_u^{\texttt{br}})$. 
Then,
$$
c_2(\pi_u^{\texttt{br}}) + h_2(u) = f_2(\pi_u^{\texttt{br}}) 
\leq
f_2(\tilde{\pi}_u^{\texttt{br}}) = c_2(\tilde{\pi}_u^{\texttt{br}}) + h_2(u).
$$
Thus, 
$c_2(\pi_u^{\texttt{br}}) \leq  c_2(\tilde{\pi}_u^{\texttt{br}})$. 
After $\rm{PP}_u$ is expanded and before
$\tilde{\rm{PP}}_u$ is expanded, $g_2^{\rm{min}}(u) = c_2(\pi_u^{\texttt{br}})$.
Combining both (in)equalities yields that
$g_2^{\rm{min}}(u) \leq c_2(\pi_u^{\texttt{br}})$ which means that $\tilde{\rm{PP}}_u$ is not expanded, contradicting the assumption.
\end{proof}

\begin{lem}[Corresponding to Lemma~4 in~\cite{UYBZSK20}]
\label{lem:4}
The sequences of extracted path pairs with the same state has strictly monotonically increasing $f_1$-values of their top-left paths.
\end{lem}
\begin{proof}
Since the sequence of expanded path pairs has
monotonically non-decreasing $f_1$-values of their top left path (Lemma~\ref{lem:2}), the sequence of expanded path pairs with the same state also has monotonically non-decreasing $f_1$-values of their top left path. 

Assume for a proof by contradiction that \is expands path pair $\rm{PP}_u = (\pi_u^{\texttt{tl}}, \pi_u^{\texttt{br}})$ 
before path pair 
$\tilde{\rm{PP}}_u = (\tilde{\pi}_u^{\texttt{tl}}, \tilde{\pi}_u^{\texttt{br}})$, that it expands no path pair with state $u$ after $\rm{PP}_u$  and before $\tilde{\rm{PP}}_u$, and that 
$f_1(\pi_u^{\texttt{tl}}) = f_1(\tilde{\pi}_u^{\texttt{tl}})$. 
We distinguish two cases: 
\begin{itemize}
  \item[\textbf{C1}]
  Path pair $\tilde{\rm{PP}}_u$ is in the Open list when \is expands path pair ${\rm{PP}}_u$:
  In this case ${\rm{PP}}_u$ and $\tilde{\rm{PP}}_u$ would be merged (such a merge is always valid regardless of the values of $\eps_1$ and $\eps_2$) which contradicts the fact that $\tilde{\rm{PP}}_u$ was expanded.

  \item[\textbf{C2}]
  Path pair $\tilde{\rm{PP}}_u$ is not in the Open list when \is expands path pair ${\rm{PP}}_u$:
  \is generates  $\tilde{\rm{PP}}_u$ after it expands ${\rm{PP}}_u$. Thus, there is a node ${\rm{PP}}_v$ in the Open list when \is expands ${\rm{PP}}_u$ that is expanded after 
${\rm{PP}}_u$ and before $\tilde{\rm{PP}}_u$ and becomes an ancestor node of $\tilde{\rm{PP}}_u$ in the search tree. 
  Since the sequence of expanded path pairs has monotonically nondecreasing $f_1$-values of their top-left paths (Lemma~\ref{lem:2}) and $f_1(\pi_u^{\texttt{tl}}) = f_1(\tilde{\pi}_u^{\texttt{tl}})$, we have that
 $$f_1(\pi_u^{\texttt{tl}}) = f_1(\tilde{\pi}_u^{\texttt{tl}})= f_1({\pi}_v^{\texttt{tl}}).$$
 When \is expands node  $\tilde{\rm{PP}}_u$, the path pair has the lexicographically smallest value of all nodes in the Open list. 
 Since 
 $f_1(\pi_u^{\texttt{tl}}) = f_1(\pi_v^{\texttt{tl}})$, it follows that 
 $f_1(\pi_u^{\texttt{br}}) \leq f_1(\pi_v^{\texttt{br}})$. 
 Again, we distinguish two cases
 \begin{itemize}
  \item[\textbf{C2.1}] $f_1(\pi_u^{\texttt{br}}) = f_1(\pi_v^{\texttt{br}})$:
  Since path pairs are ordered according to Eq.~\ref{eq:lexi}. we have that either 
  (i)~$f_2(\pi_u^{\texttt{tl}}) < f_2(\pi_v^{\texttt{tl}})$
  or that
  (ii)~$f_2(\pi_u^{\texttt{tl}}) = f_2(\pi_v^{\texttt{tl}})$.
  
  If $f_2(\pi_u^{\texttt{tl}}) < f_2(\pi_v^{\texttt{tl}})$ then

  If $f_2(\pi_u^{\texttt{tl}}) = f_2(\pi_v^{\texttt{tl}})$  then $f_2(\pi_u^{\texttt{br}}) \leq f_2(\pi_v^{\texttt{br}})$.
  Since each path in each path pair has an $f_2$-value that is no smaller than the $f_2$ values of its ancestor (Lemma~\ref{lem:1}),
  $f_2(\pi_v^{\texttt{br}}) \leq f_2(\tilde{\pi}_u^{\texttt{br}})$.
  Combining both inequalities yields $f_2(\pi_u^{\texttt{br}})\leq f_2(\tilde{\pi}_u^{\texttt{br}})$, which contradicts Lemma~\ref{lem:3}.
    \item[\textbf{C2.2}] $f_1(\pi_u^{\texttt{br}}) < f_1(\pi_v^{\texttt{br}})$:
 \end{itemize}
\end{itemize}

\end{proof}

\begin{thm}
Let $G$ be a graph,
$c_1, c_2 : E \rightarrow \R$ two cost functions 
and
$\vs$ and~$\vg$ be start and goal vertices, respectively.
Given~$\eps_1 \geq 0$ and $\eps_2 \geq 0$
\is computes an approximate Pareto frontier~$\Pi_{\vg}(\eps_1, \eps_2)$.
\end{thm}
\begin{proof}
  Let $\pi_{\vg} \in  \Pi_{\vg}$ be a solution and let  
\end{proof}

\begin{lem}
\label{lem:x}
Let 
$\rm{PP}_u = (\pi_u^{\texttt{tl}}, \pi_u^{\texttt{br}})$ 
be a path pair that
is pruned by the optimized dominance checks (Alg.~\ref{alg:is_dominated}).
Any path that was not added  
Let $\pi$ be a solution that was not  
 then any path in the partial Paerto frontier 
$\ppf_{u}^{\pi_u^{\texttt{tl}}, \pi_u^{\texttt{br}}}$ 
is approximately dominated by an extreme path of a path pair in OPEN or in the set of solutions.
\end{lem}
\begin{proof}
We prove the Lemma by induction over the number of path pairs pruned.
When no path pair is pruned the Lemma trivially holds. Now assume that the number of pruned nodes is $n+1$ and that the Lemma holds for $n \geq 0$. 

\end{proof}
}

\section{Evaluation}
\paragraph{Experimental setup.}
To evaluate our approach we compare it to \boastareps as \boastar
was recently shown to dramatically outperform other state-of-the-art algorithms for bi-criteria shortest path~\cite{UYBZSK20}.
%
All experiments were run on an
1.8GHz Intel(R) Core(TM) i7-8565U CPU  
Windows 10 machine with 16GB of RAM.
All algorithm implementations were in \Cpp.\footnote{Our code is publicly available at \url{https://github.com/CRL-Technion/path-pair-graph-search}.}
We use road maps from the 9'th DIMACS Implementation Challenge: Shortest Path\footnote{\url{http://users.diag.uniroma1.it/challenge9/download.shtml}.}.
The cost components represent travel distances ($c_1$)
and times ($c_2$). The heuristic values are the exact travel distances and times to the goal state, computed with Dijkstra’s
algorithm. Since all algorithms use the same heuristic values, heuristic-computation times are omitted. 

\begin{table}[t]
 \resizebox{\columnwidth}{!}{%
\begin{tabular}{cllllllll}
\rowcolor[HTML]{656565} 
\multicolumn{9}{c}{\cellcolor[HTML]{656565}\textbf{New York City (NY)}}                                                                                                                                                                                                                                                                                                                                                                                                                                                                                  \\
\rowcolor[HTML]{9B9B9B} 
\multicolumn{9}{c}{\cellcolor[HTML]{9B9B9B}\textbf{264,346 states, 730,100 edges}}                                                                                                                                                                                                                                                                                                                                                                                                                                                                       \\
\rowcolor[HTML]{C0C0C0} 
\cellcolor[HTML]{C0C0C0}                                         & \multicolumn{2}{c}{\cellcolor[HTML]{C0C0C0}\textbf{avg $n_{sol}$}}                                                  & \multicolumn{2}{c}{\cellcolor[HTML]{C0C0C0}\textbf{avg t}}                                                          & \multicolumn{2}{c}{\cellcolor[HTML]{C0C0C0}\textbf{min t}}                                                          & \multicolumn{2}{c}{\cellcolor[HTML]{C0C0C0}\textbf{max t}}                                                          \\
\rowcolor[HTML]{C0C0C0} 
\multirow{-2}{*}{\cellcolor[HTML]{C0C0C0}\textbf{$\varepsilon$}} & \multicolumn{1}{c}{\cellcolor[HTML]{C0C0C0}\textbf{\is}} & \multicolumn{1}{c}{\cellcolor[HTML]{C0C0C0}\textbf{\boastar}} & \multicolumn{1}{c}{\cellcolor[HTML]{C0C0C0}\textbf{\is}} & \multicolumn{1}{c}{\cellcolor[HTML]{C0C0C0}\textbf{\boastar}} & \multicolumn{1}{c}{\cellcolor[HTML]{C0C0C0}\textbf{\is}} & \multicolumn{1}{c}{\cellcolor[HTML]{C0C0C0}\textbf{\boastar}} & \multicolumn{1}{c}{\cellcolor[HTML]{C0C0C0}\textbf{\is}} & \multicolumn{1}{c}{\cellcolor[HTML]{C0C0C0}\textbf{\boastar}} \\
\textbf{0}                                                       & {\color[HTML]{000000} 158}                              & {\color[HTML]{000000} 158}                                & {\color[HTML]{000000} 1,047}                            & {\color[HTML]{000000} 405}                                & {\color[HTML]{000000} 2}                                & {\color[HTML]{000000} 0}                                  & {\color[HTML]{000000} 13,563}                           & {\color[HTML]{000000} 5,038}                              \\
\textbf{0.01}                                                    & {\color[HTML]{000000} 19}                               & {\color[HTML]{000000} 20}                                 & {\color[HTML]{000000} 291}                              & {\color[HTML]{000000} 353}                                & {\color[HTML]{000000} 3}                                & {\color[HTML]{000000} 0}                                  & {\color[HTML]{000000} 3,662}                            & {\color[HTML]{000000} 4,577}                              \\
\textbf{0.025}                                                   & {\color[HTML]{000000} 10}                               & {\color[HTML]{000000} 10}                                 & {\color[HTML]{000000} 168}                              & {\color[HTML]{000000} 295}                                & {\color[HTML]{000000} 2}                                & {\color[HTML]{000000} 0}                                  & {\color[HTML]{000000} 2,207}                            & {\color[HTML]{000000} 4,101}                              \\
\textbf{0.05}                                                    & {\color[HTML]{000000} 6}                                & {\color[HTML]{000000} 6}                                  & {\color[HTML]{000000} 111}                              & {\color[HTML]{000000} 240}                                & {\color[HTML]{000000} 3}                                & {\color[HTML]{000000} 0}                                  & {\color[HTML]{000000} 1,523}                            & {\color[HTML]{000000} 3,538}                              \\
\textbf{0.1}                                                     & {\color[HTML]{000000} 4}                                & {\color[HTML]{000000} 4}                                  & {\color[HTML]{000000} 69}                               & {\color[HTML]{000000} 174}                                & {\color[HTML]{000000} 2}                                & {\color[HTML]{000000} 0}                                  & {\color[HTML]{000000} 932}                              & {\color[HTML]{000000} 2,694}                             
\end{tabular}
}

 \resizebox{\columnwidth}{!}{%
\begin{tabular}{cllllllll}
\rowcolor[HTML]{656565} 
\multicolumn{9}{c}{\cellcolor[HTML]{656565}\textbf{San Francisco Bay (BAY)}}                                                                                                                                                                                                                                                                                                                                                                                                                                                                                  \\
\rowcolor[HTML]{9B9B9B} 
\multicolumn{9}{c}{\cellcolor[HTML]{9B9B9B}\textbf{321,270 states, 794,830 edges}}                                                                                                                                                                                                                                                                                                                                                                                                                                                                       \\
\rowcolor[HTML]{C0C0C0} 
\cellcolor[HTML]{C0C0C0}                                         & \multicolumn{2}{c}{\cellcolor[HTML]{C0C0C0}\textbf{avg $n_{sol}$}}                                                  & \multicolumn{2}{c}{\cellcolor[HTML]{C0C0C0}\textbf{avg t}}                                                          & \multicolumn{2}{c}{\cellcolor[HTML]{C0C0C0}\textbf{min t}}                                                          & \multicolumn{2}{c}{\cellcolor[HTML]{C0C0C0}\textbf{max t}}                                                          \\
\rowcolor[HTML]{C0C0C0} 
\multirow{-2}{*}{\cellcolor[HTML]{C0C0C0}\textbf{$\varepsilon$}} & \multicolumn{1}{c}{\cellcolor[HTML]{C0C0C0}\textbf{\is}} & \multicolumn{1}{c}{\cellcolor[HTML]{C0C0C0}\textbf{\boastar}} & \multicolumn{1}{c}{\cellcolor[HTML]{C0C0C0}\textbf{\is}} & \multicolumn{1}{c}{\cellcolor[HTML]{C0C0C0}\textbf{\boastar}} & \multicolumn{1}{c}{\cellcolor[HTML]{C0C0C0}\textbf{\is}} & \multicolumn{1}{c}{\cellcolor[HTML]{C0C0C0}\textbf{\boastar}} & \multicolumn{1}{c}{\cellcolor[HTML]{C0C0C0}\textbf{\is}} & \multicolumn{1}{c}{\cellcolor[HTML]{C0C0C0}\textbf{\boastar}} \\
\textbf{0}                                                       & {\color[HTML]{000000} 117}                              & {\color[HTML]{000000} 117}                                & {\color[HTML]{000000} 1,213}                            & {\color[HTML]{000000} 423}                                & {\color[HTML]{000000} 3}                                & {\color[HTML]{000000} 0}                                  & {\color[HTML]{000000} 21,751}                           & {\color[HTML]{000000} 7,584}                              \\
\textbf{0.01}                                                    & {\color[HTML]{000000} 16}                               & {\color[HTML]{000000} 17}                                 & {\color[HTML]{000000} 222}                              & {\color[HTML]{000000} 369}                                & {\color[HTML]{000000} 4}                                & {\color[HTML]{000000} 0}                                  & {\color[HTML]{000000} 2,927}                            & {\color[HTML]{000000} 6,805}                              \\
\textbf{0.025}                                                   & {\color[HTML]{000000} 9}                                & {\color[HTML]{000000} 9}                                  & {\color[HTML]{000000} 127}                              & {\color[HTML]{000000} 321}                                & {\color[HTML]{000000} 3}                                & {\color[HTML]{000000} 0}                                  & {\color[HTML]{000000} 1,530}                            & {\color[HTML]{000000} 5,614}                              \\
\textbf{0.05}                                                    & {\color[HTML]{000000} 5}                                & {\color[HTML]{000000} 6}                                  & {\color[HTML]{000000} 85}                               & {\color[HTML]{000000} 272}                                & {\color[HTML]{000000} 3}                                & {\color[HTML]{000000} 0}                                  & {\color[HTML]{000000} 1,109}                            & {\color[HTML]{000000} 4,570}                              \\
\textbf{0.1}                                                     & {\color[HTML]{000000} 3}                                & {\color[HTML]{000000} 4}                                  & {\color[HTML]{000000} 54}                               & {\color[HTML]{000000} 199}                                & {\color[HTML]{000000} 3}                                & {\color[HTML]{000000} 0}                                  & {\color[HTML]{000000} 576}                              & {\color[HTML]{000000} 3,056}                             
\end{tabular}
}

 \resizebox{\columnwidth}{!}{%
\begin{tabular}{cllllllll}
\rowcolor[HTML]{656565} 
\multicolumn{9}{c}{\cellcolor[HTML]{656565}\textbf{Colorado (COL)}}                                                                                                                                                                                                                                                                                                                                                                                                                                                                                  \\
\rowcolor[HTML]{9B9B9B} 
\multicolumn{9}{c}{\cellcolor[HTML]{9B9B9B}\textbf{435,666 states, 1,042,400 edges}}                                                                                                                                                                                                                                                                                                                                                                                                                                                                       \\
\rowcolor[HTML]{C0C0C0} 
\cellcolor[HTML]{C0C0C0}                                         & \multicolumn{2}{c}{\cellcolor[HTML]{C0C0C0}\textbf{avg $n_{sol}$}}                                                  & \multicolumn{2}{c}{\cellcolor[HTML]{C0C0C0}\textbf{avg t}}                                                          & \multicolumn{2}{c}{\cellcolor[HTML]{C0C0C0}\textbf{min t}}                                                          & \multicolumn{2}{c}{\cellcolor[HTML]{C0C0C0}\textbf{max t}}                                                          \\
\rowcolor[HTML]{C0C0C0} 
\multirow{-2}{*}{\cellcolor[HTML]{C0C0C0}\textbf{$\varepsilon$}} & \multicolumn{1}{c}{\cellcolor[HTML]{C0C0C0}\textbf{\is}} & \multicolumn{1}{c}{\cellcolor[HTML]{C0C0C0}\textbf{\boastar}} & \multicolumn{1}{c}{\cellcolor[HTML]{C0C0C0}\textbf{\is}} & \multicolumn{1}{c}{\cellcolor[HTML]{C0C0C0}\textbf{\boastar}} & \multicolumn{1}{c}{\cellcolor[HTML]{C0C0C0}\textbf{\is}} & \multicolumn{1}{c}{\cellcolor[HTML]{C0C0C0}\textbf{\boastar}} & \multicolumn{1}{c}{\cellcolor[HTML]{C0C0C0}\textbf{\is}} & \multicolumn{1}{c}{\cellcolor[HTML]{C0C0C0}\textbf{\boastar}} \\
\textbf{0}                                                       & {\color[HTML]{000000} 318}                              & {\color[HTML]{000000} 318}                                & {\color[HTML]{000000} 3,368}                            & {\color[HTML]{000000} 1,144}                              & {\color[HTML]{000000} 5}                                & {\color[HTML]{000000} 1}                                  & {\color[HTML]{000000} 56,153}                           & {\color[HTML]{000000} 17,348}                             \\
\textbf{0.01}                                                    & {\color[HTML]{000000} 15}                               & {\color[HTML]{000000} 16}                                 & {\color[HTML]{000000} 372}                              & {\color[HTML]{000000} 944}                                & {\color[HTML]{000000} 5}                                & {\color[HTML]{000000} 1}                                  & {\color[HTML]{000000} 3,633}                            & {\color[HTML]{000000} 16,304}                             \\
\textbf{0.025}                                                   & {\color[HTML]{000000} 7}                                & {\color[HTML]{000000} 8}                                  & {\color[HTML]{000000} 192}                              & {\color[HTML]{000000} 768}                                & {\color[HTML]{000000} 5}                                & {\color[HTML]{000000} 1}                                  & {\color[HTML]{000000} 1,690}                            & {\color[HTML]{000000} 15,037}                             \\
\textbf{0.05}                                                    & {\color[HTML]{000000} 4}                                & {\color[HTML]{000000} 5}                                  & {\color[HTML]{000000} 116}                              & {\color[HTML]{000000} 608}                                & {\color[HTML]{000000} 5}                                & {\color[HTML]{000000} 1}                                  & {\color[HTML]{000000} 991}                              & {\color[HTML]{000000} 13,718}                             \\
\textbf{0.1}                                                     & {\color[HTML]{000000} 3}                                & {\color[HTML]{000000} 3}                                  & {\color[HTML]{000000} 69}                               & {\color[HTML]{000000} 470}                                & {\color[HTML]{000000} 4}                                & {\color[HTML]{000000} 1}                                  & {\color[HTML]{000000} 593}                              & {\color[HTML]{000000} 11,977}                            
\end{tabular}
}

 \resizebox{\columnwidth}{!}{%
\begin{tabular}{cllllllll}
\rowcolor[HTML]{656565} 
\multicolumn{9}{c}{\cellcolor[HTML]{656565}\textbf{Florida (FL)}}                                                                                                                                                                                                                                                                                                                                                                                                                                                                                  \\
\rowcolor[HTML]{9B9B9B} 
\multicolumn{9}{c}{\cellcolor[HTML]{9B9B9B}\textbf{1,070,376 states, 2,712,798 edges}}                                                                                                                                                                                                                                                                                                                                                                                                                                                                       \\
\rowcolor[HTML]{C0C0C0} 
\cellcolor[HTML]{C0C0C0}                                         & \multicolumn{2}{c}{\cellcolor[HTML]{C0C0C0}\textbf{avg $n_{sol}$}}                                                  & \multicolumn{2}{c}{\cellcolor[HTML]{C0C0C0}\textbf{avg t}}                                                          & \multicolumn{2}{c}{\cellcolor[HTML]{C0C0C0}\textbf{min t}}                                                          & \multicolumn{2}{c}{\cellcolor[HTML]{C0C0C0}\textbf{max t}}                                                          \\
\rowcolor[HTML]{C0C0C0} 
\multirow{-2}{*}{\cellcolor[HTML]{C0C0C0}\textbf{$\varepsilon$}} & \multicolumn{1}{c}{\cellcolor[HTML]{C0C0C0}\textbf{\is}} & \multicolumn{1}{c}{\cellcolor[HTML]{C0C0C0}\textbf{\boastar}} & \multicolumn{1}{c}{\cellcolor[HTML]{C0C0C0}\textbf{\is}} & \multicolumn{1}{c}{\cellcolor[HTML]{C0C0C0}\textbf{\boastar}} & \multicolumn{1}{c}{\cellcolor[HTML]{C0C0C0}\textbf{\is}} & \multicolumn{1}{c}{\cellcolor[HTML]{C0C0C0}\textbf{\boastar}} & \multicolumn{1}{c}{\cellcolor[HTML]{C0C0C0}\textbf{\is}} & \multicolumn{1}{c}{\cellcolor[HTML]{C0C0C0}\textbf{\boastar}} \\
\textbf{0}                                                       & {\color[HTML]{000000} 357}                              & {\color[HTML]{000000} 357}                                & {\color[HTML]{000000} 12,177}                           & {\color[HTML]{000000} 3,545}                              & {\color[HTML]{000000} 12}                               & {\color[HTML]{000000} 3}                                  & {\color[HTML]{000000} 270,450}                          & {\color[HTML]{000000} 68,467}                             \\
\textbf{0.01}                                                    & {\color[HTML]{000000} 12}                               & {\color[HTML]{000000} 13}                                 & {\color[HTML]{000000} 1,000}                            & {\color[HTML]{000000} 3,228}                              & {\color[HTML]{000000} 12}                               & {\color[HTML]{000000} 3}                                  & {\color[HTML]{000000} 17,092}                           & {\color[HTML]{000000} 64,642}                             \\
\textbf{0.025}                                                   & {\color[HTML]{000000} 6}                                & {\color[HTML]{000000} 6}                                  & {\color[HTML]{000000} 479}                              & {\color[HTML]{000000} 2,738}                              & {\color[HTML]{000000} 11}                               & {\color[HTML]{000000} 3}                                  & {\color[HTML]{000000} 8,060}                            & {\color[HTML]{000000} 59,908}                             \\
\textbf{0.05}                                                    & {\color[HTML]{000000} 3}                                & {\color[HTML]{000000} 4}                                  & {\color[HTML]{000000} 263}                              & {\color[HTML]{000000} 1,985}                              & {\color[HTML]{000000} 12}                               & {\color[HTML]{000000} 3}                                  & {\color[HTML]{000000} 3,945}                            & {\color[HTML]{000000} 39,214}                             \\
\textbf{0.1}                                                     & {\color[HTML]{000000} 2}                                & {\color[HTML]{000000} 2}                                  & {\color[HTML]{000000} 144}                              & {\color[HTML]{000000} 1,172}                              & {\color[HTML]{000000} 11}                               & {\color[HTML]{000000} 2}                                  & {\color[HTML]{000000} 1,780}                            & {\color[HTML]{000000} 21,665}                            
\end{tabular}
}

\caption{Average number of solutions ($n_{\rm{sol}}$) and runtime (in ms) comparing \boastareps and \is on 50 random queries sampled for four different roadmaps for different approximation factors.}
\label{tbl:res}
\end{table}

\begin{figure*}[t]%
\centering
  \subfloat[]{
    \label{fig:NE1}
    \includegraphics[height=3.7cm]{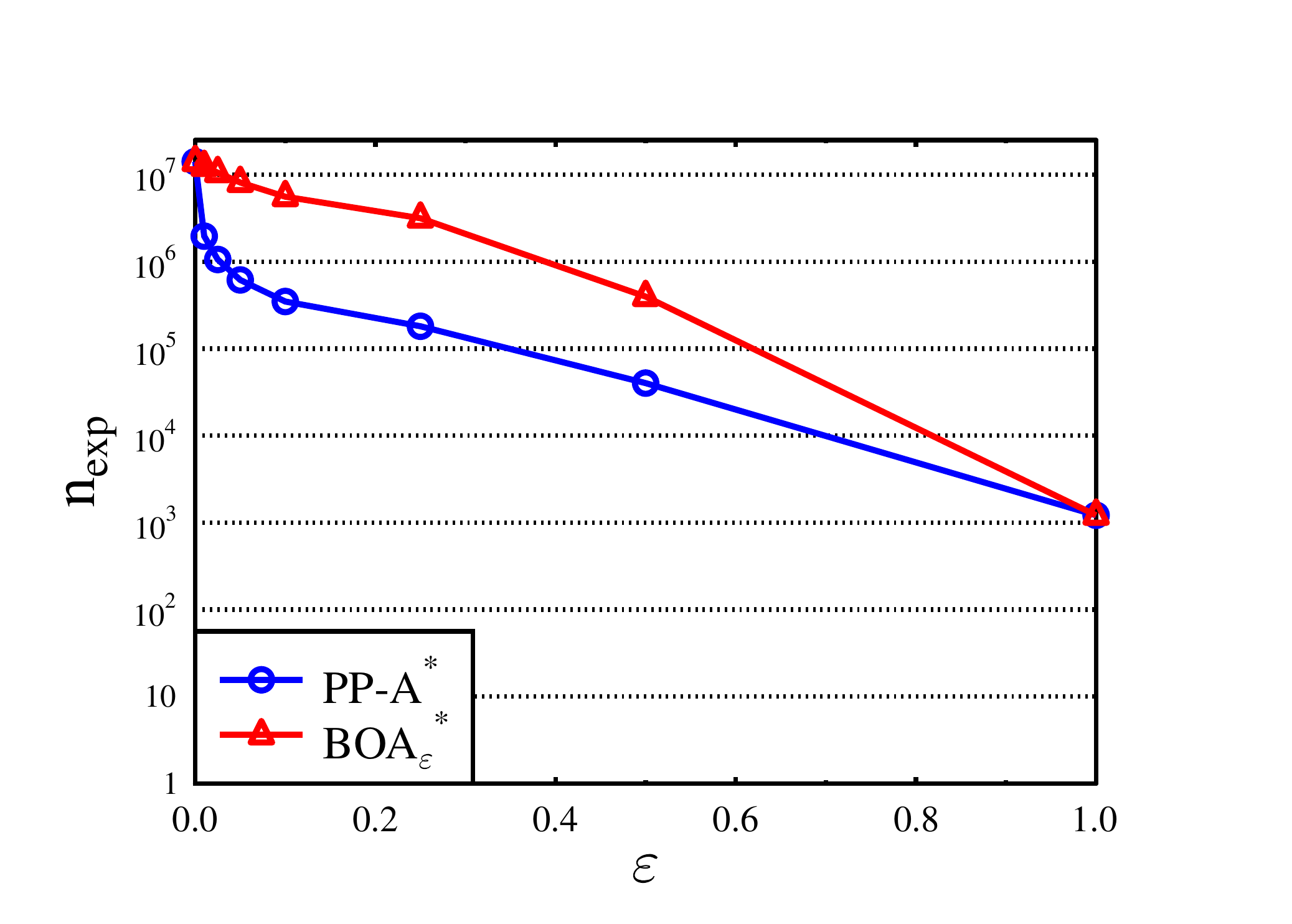}
  }
  \hspace*{-10mm}
  \subfloat[]{
    \label{fig:NE2}
    \includegraphics[height=3.7cm]{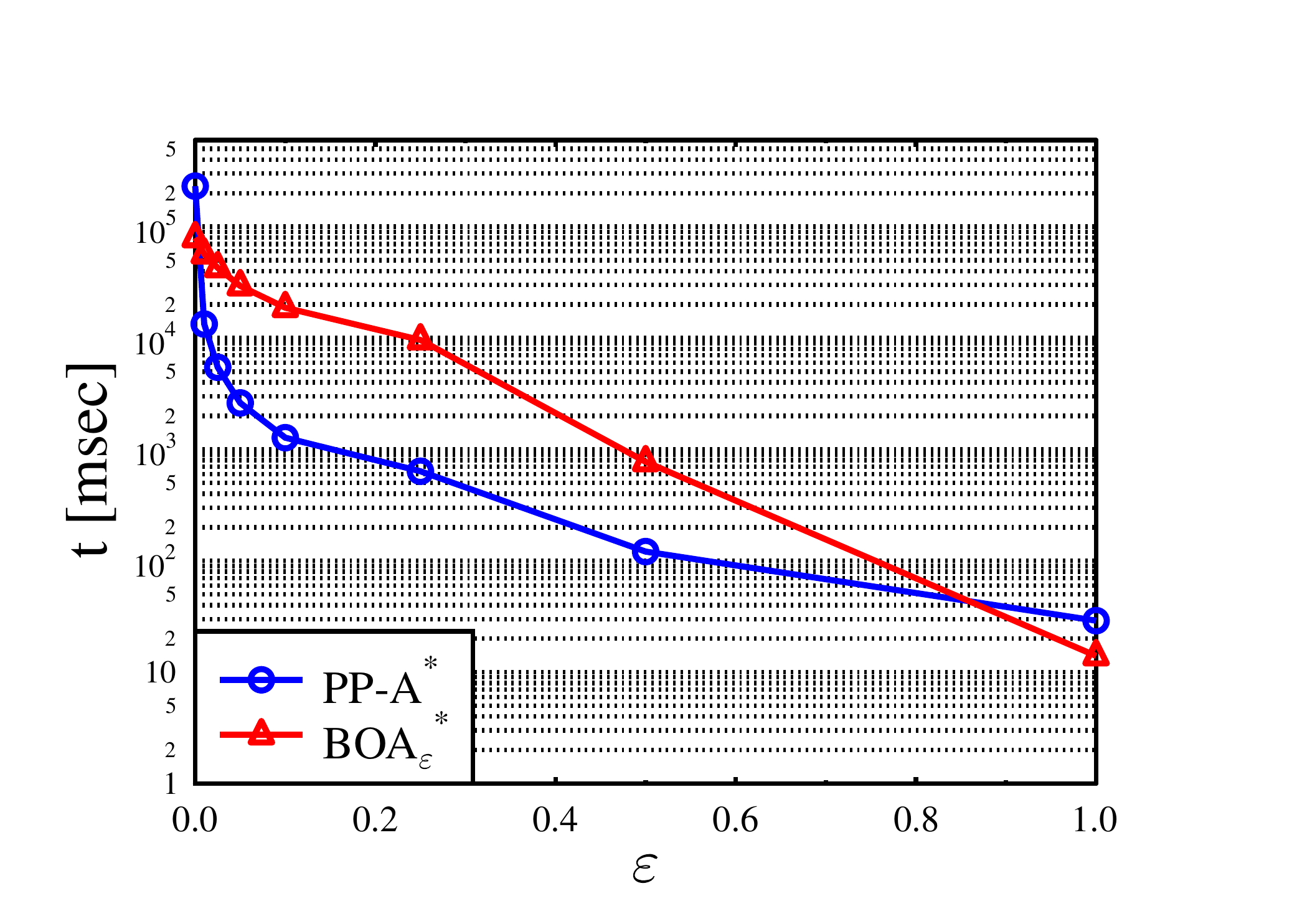}
  }
  \hspace*{-10mm}
  \subfloat[]{
    \label{fig:NE4}
    \includegraphics[height=3.7cm]{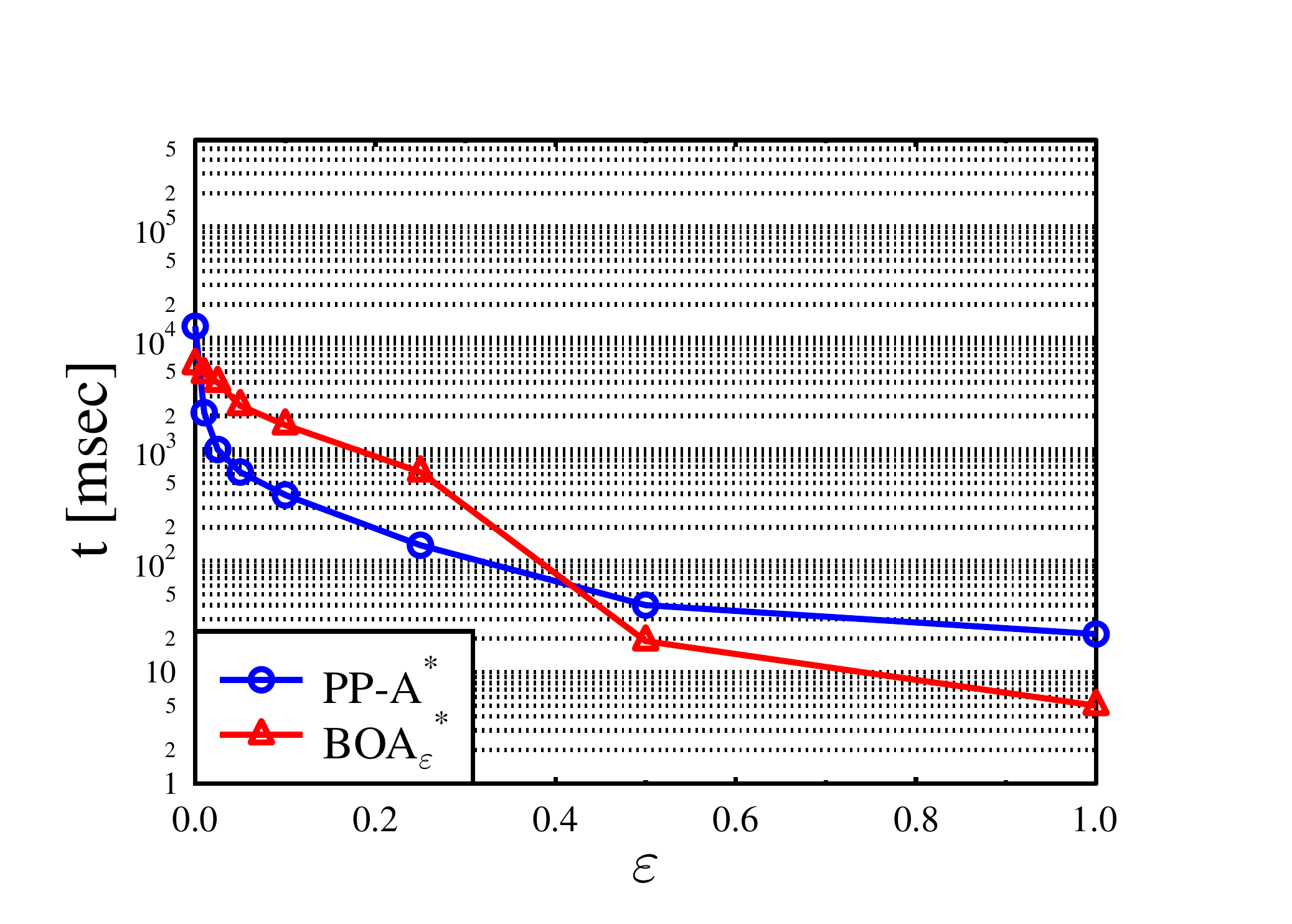}
  }  \hspace*{-10mm}
  \subfloat[]{
    \label{fig:NE3}
    \includegraphics[height=3.7cm]{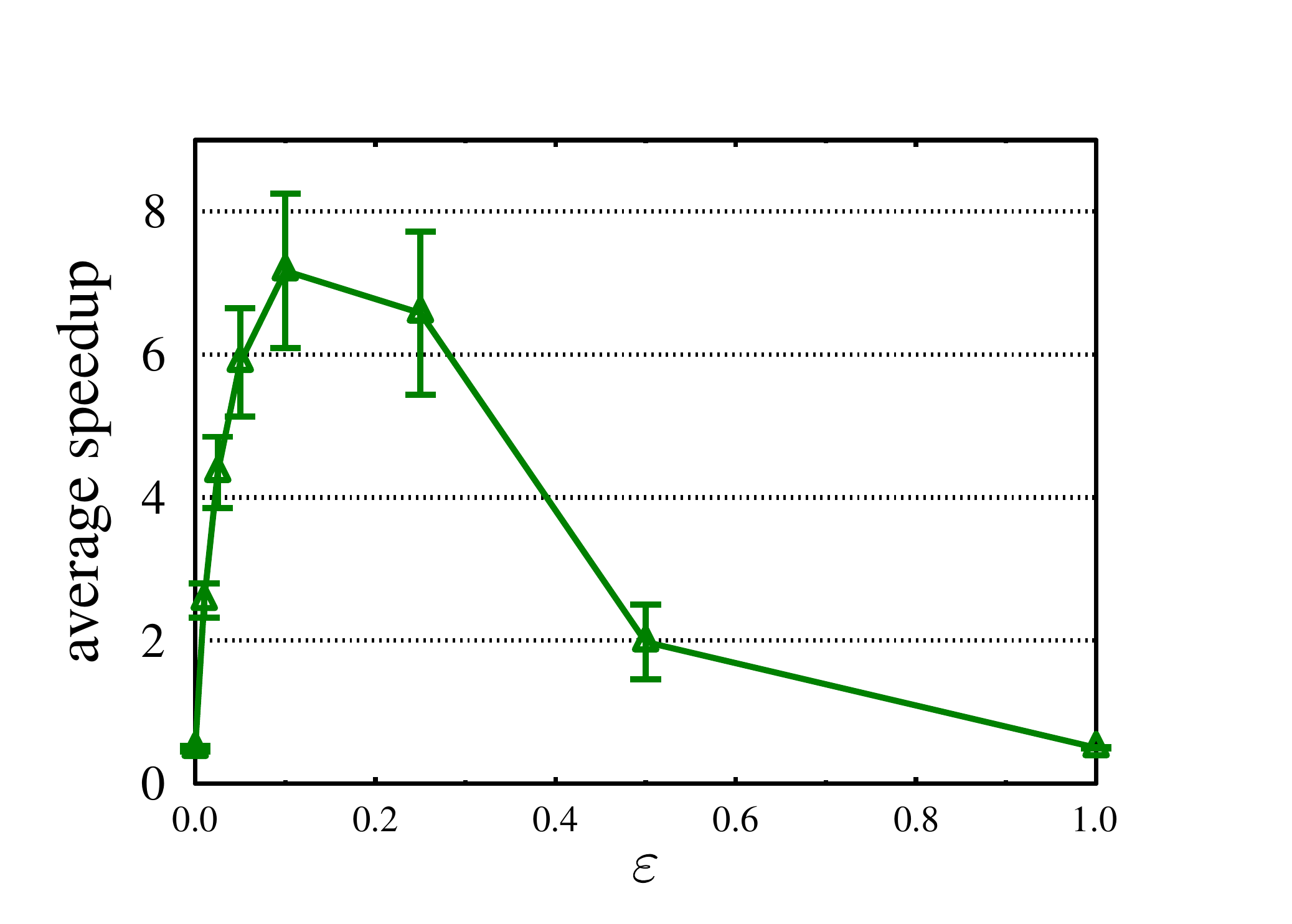}
  }%
  \caption{North East (NE) plots.
  \protect \subref{fig:NE1}
  The average number of expanded nodes ($n_{\rm{exp}}$). 
  \protect \subref{fig:NE2} and 
  \protect \subref{fig:NE4}
  the time (arithmetic-mean and geometric mean, respectively) for both algorithms as a function of the approximation factor. 
  Notice the logarithmic scale in the $y$-axis for the first three plots.
  \protect \subref{fig:NE3}
  The average speedup of \is when compared to \boastareps as a function of the approximation factor. 
  Error bars denote one standard error (error bars in~\protect \subref{fig:NE1} through \protect \subref{fig:NE4} are not visible due to the logarithmic scale).
  }
\label{fig:NE}
\end{figure*}

\paragraph{General comparison.}Similar to the experiments of Hernandez et al~\cite{UYBZSK20} we start by comparing the algorithms for four different roadmaps containing between roughly $250K$ and $1M$ vertices. 
Table~\ref{tbl:res} summarizes the number of solutions in the approximate Pareto frontier and average, minimum and maximum running times of the two algorithms using the following values\footnote{While \is allows a user to specify two approximation factors corresponding to the two cost functions, this is not the case for \boastar. Thus, in all experiments we use a single approximation factor $\varepsilon$ and set $\varepsilon_1 = \varepsilon_2 = \varepsilon$.} 
$\varepsilon \in \{ 0, 0.01, 0.025, 0.05, 0.1\}$.
Here, approximation values of zero and $0.01$ correspond to computing the entire Pareto frontier and approximating it using a value of~$1\%$, respectively.

When computing the entire Pareto frontier \boastar is roughly three times faster than \is   on average. 
This is to be expected as \is stores for each element in the priority queue two paths and requires more computationally-demanding operations.
As the approximation factor is increased, the average running time of \is drops faster, when compared to \boastareps and we observe a significant average speedup. 
%
Interestingly, when looking at the minimal running time, \boastareps significantly outperforms \is. This is because in such settings the approximate Pareto frontier contains one solution, which \boastareps is able to compute very fast. Other nodes are approximately dominated by this solution and the algorithm can terminate very quickly.
\is, on the other hand, still performs merge operations which incur a computational overhead.
When looking at the maximal running time, we can see an opposite trend where \is outperforms \boastareps by a large factor.
%

\paragraph{Pinpointing the performance differences between \is and \boastareps.}
The first set of results suggest that as the problem becomes harder, the speedup that \is may offer becomes more pronounced.
We empirically quantify this claim by moving to a larger map called the North East (NE) map which contains 1,524,453 states and 3,897,636 edges where we obtain even larger speedups (see Table~\ref{tbl:res-NE}).

\begin{table}[t]
 \resizebox{\columnwidth}{!}{%
\begin{tabular}{c ll ll ll}
\rowcolor[HTML]{656565} 
\multicolumn{7}{c}{\cellcolor[HTML]{656565}\textbf{North East (NE)}}                                                                                                                                                                                                                                                                                                                                                                                                                                                                                                                                                        \\
\rowcolor[HTML]{9B9B9B} 
\multicolumn{7}{c}{\cellcolor[HTML]{9B9B9B}\textbf{1,524,453 states, 3,897,636 edges}}                                                                                                                                                                                                                                                                                                                                                                                                                                                                                                                                             \\
\rowcolor[HTML]{C0C0C0} 
\cellcolor[HTML]{C0C0C0}                                         & \multicolumn{2}{c}{\cellcolor[HTML]{C0C0C0}\textbf{avg t}}                                                                                                                         & \multicolumn{2}{c}{\cellcolor[HTML]{C0C0C0}\textbf{min t}}                                                                                                                         & \multicolumn{2}{c}{\cellcolor[HTML]{C0C0C0}\textbf{max t}}                                                                                                                         \\
\rowcolor[HTML]{C0C0C0} 
\multirow{-2}{*}{\cellcolor[HTML]{C0C0C0}\textbf{$\varepsilon$}} & \multicolumn{1}{c}{\cellcolor[HTML]{C0C0C0}\textbf{PP-A*}} & \multicolumn{1}{c}{\cellcolor[HTML]{C0C0C0}\textbf{BOA*}} & \multicolumn{1}{c}{\cellcolor[HTML]{C0C0C0}\textbf{PP-A*}} & \multicolumn{1}{c}{\cellcolor[HTML]{C0C0C0}\textbf{BOA*}} &  \multicolumn{1}{c}{\cellcolor[HTML]{C0C0C0}\textbf{PP-A*}} & \multicolumn{1}{c}{\cellcolor[HTML]{C0C0C0}\textbf{BOA*}} 
\\
\textbf{0}                                                       & {\color[HTML]{000000} 192.6}                            & {\color[HTML]{000000} 59.5}                              & {\color[HTML]{000000} 0.04}                             & {\color[HTML]{000000} 0.02}                               & {\color[HTML]{000000} 2,4189.9}
& {\color[HTML]{000000} 592.6}                             \\
\textbf{0.01}                                                    & {\color[HTML]{000000} 13.1}                             & {\color[HTML]{000000} 68.3}                              & {\color[HTML]{000000} 0.03}                             & {\color[HTML]{000000} 0.01}                               & {\color[HTML]{000000} 111.6}                            & {\color[HTML]{000000} 600.9}                             \\
\textbf{0.025}                                                   & {\color[HTML]{000000} 5.6}                             & {\color[HTML]{000000} 57.3}                              & {\color[HTML]{000000} 0.02}                             & {\color[HTML]{000000} 0.01}                               & {\color[HTML]{000000} 46.9}                             & {\color[HTML]{000000} 510.9}                             \\
\textbf{0.05}                                                    & {\color[HTML]{000000} 2.7}                              & {\color[HTML]{000000} 40.8}                              & {\color[HTML]{000000} 0.02}                             & {\color[HTML]{000000} 0.01}                               & {\color[HTML]{000000} 22.6}                             & {\color[HTML]{000000} 345.1}                             \\
\textbf{0.1}                                                     & {\color[HTML]{000000} 1.3}                              & {\color[HTML]{000000} 25.8}                               & {\color[HTML]{000000} 0.02}                             & {\color[HTML]{000000} 0.01}                               & {\color[HTML]{000000} 9.0}                             & {\color[HTML]{000000} 229.8}                              \end{tabular}
}
\caption{Runtime (in seconds) comparing \boastar and \is on 50 random queries sampled for the NE map.}
\label{tbl:res-NE}
\end{table}

We plot the number of nodes expanded (which typically is proportional to the running time of \astar-like algorithms) of each algorithm as a function of the approximation factor (see, Fig.~\ref{fig:NE1}.
Here we used $\varepsilon \in \{ 0, 0.01, 0.025, 0.05, 0.1, 0.25, 0.5, 1\}$.
Additionally, we plot both the arithmetic mean (Fig.~\ref{fig:NE2})
as well as the 
geometric mean (Fig.~\ref{fig:NE4})
of each algorithm as a function of the approximation factor.\footnote{We used both arithmetic and geometric mean as the arithmetic mean can be misleading skewing the mean towards the results on larger instances. Together, both means better capture the results. }

We observe that the number of nodes expanded monotonically decreases when the approximation  factor is increased for both algorithms. This is because additional nodes may be pruned which in turn, prunes all nodes in their subtree.
It is important to discuss \emph{how} these nodes are pruned: Recall that \boastareps prunes nodes according to Eq.~\ref{eq:d0}.
Thus, increasing the approximation factor only allows to prune more nodes according to the already-computed solutions and not according to the paths computed to intermediate nodes.
In contrast, \is prunes nodes according to Eq.~\ref{eq:d1} and~\ref{eq:d2}. Thus, in addition to more path pairs being merged,  increasing the approximation allows to prune more path pairs according to the already-computed solutions as well as the path pairs computed to intermediate vertices.
Thus, for relatively-small approximation factors that are greater than zero (in our setting, $0 < \varepsilon < 0.5$, we see that \boastar expands a significantly higher number of nodes than \is which explains the speedups we observed.
However, for large approximation factors, there is typically only one solution in the approximate Pareto frontier. This solution, which is found quickly by  \boastareps,
allows to prune almost all other paths which results in \boastareps running much faster than \is.
This trend is visualized in Fig.~\ref{fig:NE3}.

\section{Future Research}


\subsection{Bidirectional search}
We presented \is as a unidirectional search algorithm, however a common approach to speed up search algorithms is to perform  two simultaneous searches: a forward search from $\vs$ to $\vg$ and a backward search from $\vg$ to~$\vs$~\cite{pohl1971bi}.
Thus, an immediate task for future research is to suggest a bidirectional extension of \is. Here we can build upon recent progress in bi-directional search algorithms for bi-criteria shortest-path problems~\cite{sedeno2019biobjective}.
\subsection{Beyond two optimization criteria}
We presented \is as a search algorithm for two optimization criteria, however the same concepts can be used for multi-criteria optimization problems. Unfortunately, it is not clear how to perform  operations such as dominance checks efficiently since the methods presented for \boastar do not extend to such settings.

\section*{Acknowledgements}
We wish to thank 
Carlos Hernandez, William Yeoh, Jorge  Baier and             Sven Koenig for insightful discussions regarding BOA*
and Ariel Felner for comments on early drafts of this paper.
In addition, we thank the anonymous reviewers of the ICAPS 2020 Workshop on Heuristics and Search for Domain-independent Planning (HSDIP 2020) for insightful comments on an early version of this paper.

Finally, this research was partially supported by grants No. 102583, 2028142 from the Isaeli Ministry of Science \& Technology (MOST), and by grant No. 1018193 from the United States-Israel Binational Science Foundation (BSF).


\end{document}